\documentclass[a4paper,twoside,reqno]{amsart}
\usepackage{amsmath}
\usepackage{amsfonts}
\usepackage{amssymb}
\usepackage{amsthm}
\usepackage{newlfont}
\usepackage{graphicx}
\usepackage{amscd}
\usepackage{young}

\textwidth 6.25in
\textheight 9in
\topmargin -0.3cm
\leftmargin -3cm
\oddsidemargin=0cm
\evensidemargin=0cm
% Fuzz -------------------------------------------------------------------
\hfuzz5pt % Don't bother to report over-full boxes if over-edge is < 5pt
%\setlength{\tclineskip}{1.05\baselineskip}
%%% ----------------------------------------------------------------------
%\include{/home/doliwa/texfiles/mydef}
% THEOREMS ---------------------------------------------------------------
\theoremstyle{plain}
\newtheorem{Th}{Theorem}[section]
\newtheorem{Cor}[Th]{Corollary}
\newtheorem{Lem}[Th]{Lemma}
\newtheorem{Prop}[Th]{Proposition}
\theoremstyle{definition}
\newtheorem{Def}{Definition}[section]
\newtheorem{Ex}{Example}[section]

\theoremstyle{remark}
\newtheorem*{Rem}{Remark}%[section]
\numberwithin{equation}{section}
%%% ----------------------------------------------------------------------

\newcommand{\NN}{{\mathbb N}}

\newcommand{\DD}{{\mathbb D}}

\newcommand{\ZZ}{{\mathbb Z}}
\newcommand{\VV}{{\mathbb V}}
\newcommand{\RR}{{\mathbb R}}

%%%-----------------------------------------------------------------------

\setcounter{MaxMatrixCols}{15}

%%%-----------------------------------------------------------------------

\newcommand{\be}{\boldsymbol{e}}
\newcommand{\bs}{\boldsymbol{s}}

\newcommand{\bPhi}{\boldsymbol{\Phi}}

%%%%%%%%%%%%%%%%%%%%%%%%%%%%%%%%%%%%%%%%%%%%%%%%%%%%%%%%%%%%%%%%%%%

\begin{document}
	
	\title[Non-commutative multiple bi-orthogonal polynomials and integrability]
	{Non-commutative multiple bi-orthogonal polynomials:\\ formal approach and integrability}
	
	\author{Adam Doliwa}
	
	\address{Faculty of Mathematics and Computer Science\\
		University of Warmia and Mazury in Olsztyn\\
		ul.~S{\l}oneczna~54\\ 10-710~Olsztyn\\ Poland} 
	\email{doliwa@matman.uwm.edu.pl}
	\urladdr{http://wmii.uwm.edu.pl/~doliwa/}
	
	%
%	\date{\today}
	\keywords{multiple orthogonal polynomials, bi-orthogonal polynomials, non-commutative integrable systems, quasideterminants, Hirota system, Toda lattice equations}
	\subjclass[2020]{42C05, 39A60, 15A15, 37N30, 37K60, 65Q30}
	
	\begin{abstract}
We define the non-commutative multiple bi-orthogonal polynomial systems, which simultaneously generalize the concepts of multiple orthogonality, matrix orthogonal polynomials and of the bi-orthogonality. 
 We present quasideterminantal expressions for such polynomial systems in terms of formal bi-moments. The normalization functions for such monic polynomials satisfy the non-commutative Hirota equations, while the polynomials provide solution of the corresponding linear system. This shows, in particular, that our polynomial systems form a part of the theory of integrable systems.  
 We study also a specialization of the problem to  non-commutative multiple orthogonal polynomials, what results in the corresponding Hankel-type quasideterminantal expressions in terms of the moments. Moreover, such a reduction allows to introduce in a standard way the discrete-time variable and gives rise to an integrable system which is non-commutative version of the multidimensional discrete-time Toda equations.

	\end{abstract}
	\maketitle
	
	\section{Introduction}

The standard theory of orthogonal polynomials \cite{Chihara,Ismail,Szego} provides important technical tool used to study various problems in theoretical physics or applied mathematics.  They can be encountered when solving, by separation of variables, partial differential equations of the classical field theory or the quantum mechanics~\cite{Nikiforov-Suslov-Uvarov}. By relation to continued fractions the orthogonal polynomials appear in number theory and combinatorics~\cite{Flajolet,Viennot}.
Another source are problems of probability theory leading~\cite{KarlinMcGregor,Schoutens,Iglesia} to the difference equations of hypergeometric type or various $q$--analogs of the above ones \cite{Ismail}. 
In mathematical physics they can be found in the spectral theory of operators in Hilbert space~\cite{Akhiezer,Szego,Geronimus}. More recently they were associated with the theory of representations of groups and algebras (including quantum groups) \cite{Vilenkin-Klimyk,Klimyk-Schmudgen}. Orthogonal polynomials on the unit circle are used to study quantum walks~\cite{CMGV,Doliwa-Siemaszko-QW}. They have found application in the theory of random matrices~\cite{FIK,Deift} and are used to construct special solutions to Painlev\'{e} equations~\cite{AdlervanMoerbeke,VanAssche}.

The idea of considering matrix-valued measures is due to Krein~\cite{Krein}, who was motivated by a moment problem arising in operator theory; see also more recent works~\cite{AptekarevNikishin,Geronimo}. Gradually, matrix orthogonal polynomials became useful tool to study problems arising from group representation theory~\cite{Grunbaum,Koelink}, quasi-birth-and-death processes~\cite{GrunbaumIglesia}, combinatorics~\cite{DuitsKuijlaars} or integrable systems~\cite{Miranian,Alvarez-FernandezPrietoManas,BranquinhoMorenoFradiManas,Shi-HaoLi,LSYZ}; see also \cite{SinapAssche,Damanik} for reviews. Formal theory of non-commutative orthogonal polynomials, which we will follow in our work, was given in \cite{NCSF}. It was formulated in terms of quasideterminants~\cite{Quasideterminants-GR1}, which are an important tool in non-commutative linear algebra.

Multiple orthogonal polynomials \cite{Aptekarev,NikishinSorokin,MF-VA} are a generalization of orthogonal polynomials in which the orthogonality is distributed among a number of weights. 
The development of the theory of multiple orthogonal polynomials in XXth century is summarized in~\cite{NikishinSorokin,Aptekarev}. Properties of multiple orthogonal polynomials for classical weights were described in~\cite{VanAsscheCoussement,AptekarevBranquinhoVanAssche}. Application to random matrices with external source and non-intersecting path ensembles was the subject of \cite{BleherKuijlaars,DaemsKuilaars,Kuijlaars}. Also some structural results generalizing those of orthogonal polynomials can be found in~\cite{CoussementVanAasche,Ismail,VanAsche-nn-mop,BranquinhoMorenoManas}.  In \cite{Sokal} the formal theory of multiple orthogonal polynomials with coefficients in a commutative ring is built with the starting point being the moments (of the measures) --- the approach well known~\cite{Chihara} in the standard theory of orthogonal polynomials. 

Recently, tho concept of orthogonality of a polynomial system was extended \cite{BertolaGekhtmanSzmigielski-JPA,BertolaGekhtmanSzmigielski-JAT} to bi-orthogonality, where the pairing between polynomials does not have to be symmetric. Such a generalization can be put in relation to previous works on skew orthogonal polynomials and their applications~\cite{AFNvM,AHvM}, see also more recent work \cite{ChangHeHuLi} on the subject. This idea has been recently applied to study matrix bi-orthogonal polynomials \cite{BranquinhoMorenoManas-MBOP,LSYZ} or even a version of multiple bi-orthogonal polynomials \cite{LiShenXiangYu} together with its symmetric and skew-symmetric reductions. 

The goal of the paper is twofold. Firstly, we unify the three above generalizations of the theory of orthogonal polynomials by proposing the concept of non-commutative multiple bi-orthogonal polynomials. Secondly, we present the theory of such polynomial systems as a part of the theory of integrable systems. In particular, we generalize to non-commutative case the well known connection between the orthogonal polynomials on the real line and the Toda lattice equations~\cite{Toda-TL,Hirota-2dT,Flaschka,Moser}, which form a paradigmatic example of an integrable system.
In recent papers~\cite{Alvarez-FernandezPrietoManas,AptekarevDerevyaginMikiVanAssche,FernandezManas,Doliwa-MOP} relation of the multiple orthogonal polynomials to integrable systems is studied. Also in the above-mentioned papers on bi-orthogonal polynomials and their matrix version the relation to integrability was essential.

Integrable discrete equations \cite{IDS} have emerged as a  field of study within mathematical physics and applied mathematics, providing insights into the behavior of discrete dynamical systems with remarkable properties. The concept of integrability goes beyond mere solvability and extends to the existence of a rich algebraic and geometric structure, abundance of conserved quantities and symmetries. This integrability property enables the development of powerful mathematical methods for studying behavior of the equations. The link between integrable discrete systems and orthogonal polynomials provides better understanding of their solutions, recursion relations, and symmetry properties.

In the present work we follow the approach to integrability initiated by Hirota, and developed by Sato and his school \cite{Sato,DKJM}. On the elementary level it is based on determinantal identities~\cite{Hirota-book} satisfied by the so called $\tau$-function; see \cite{Harnad} for modern presentation of its advanced aspects. Within this approach a particular role is played by Hirota's equation~\cite{Hirota}, known also as the discrete Kadomtsev-Petviashvili (KP) equation. It contains, as symmetry reductions or appropriate limits, many known integrable equations, see  \cite{DKJM,Miwa,Shiota,Zabrodin,KNS-rev,BialeckiDoliwa,Dol-Des,Dol-AN} for various aspects of the Hirota system both on the classical and quantum levels. 
In our research presented below we consider the non-commutative version of the Hirota system introduced by Nimmo~\cite{Nimmo-NCKP}.

 We give also, as a specification of our approach, the non-commutative version of recent results on multiple orthogonal polynomials presented in~\cite{AptekarevDerevyaginMikiVanAssche,Doliwa-MOP}, in particular that of the multidimensional discrete-time Toda lattice. Relation to integrability of the close connected subject~\cite{NikishinSorokin,Aptekarev,AptekarevKuijlaars,VanAsche-HPAO} of Hermite--Pad\'{e} (type I) approximation~\cite{BakerGraves-Morris} has been clarified recently in~\cite{Doliwa-Siemaszko-HP}, where determinantal identities were used to formulate corresponding reduction of the Hirota system on the level of $\tau$-function. 

The structure of the paper is as follows. In Section~\ref{sec:QD,NCH} to provide a motivation and background, we recall first standard facts on relation of the Toda lattice equations to orthogonal polynomials. We present also the non-commutative Hirota system, and then recall necessary information on quasideterminants. Then in Section~\ref{sec:NC-multiple-BOP} we define  non-commutative multiple bi-orthogonal polynomials and we provide their quasideterminantal representation. We discuss also the corresponding generalization of the three term recurrence formula, which can be interpreted as a linear problem for the non-commutative Hirota system, and leads to the corresponding solutions of the system. In Section~\ref{sec:NC-multiple-OP} we go down from bi-orthogonality to standard orthogonality condition obtaining the non-commutative multiple orthogonal polynomials, which also are new in the literature. Finally, in Section~\ref{sec:MTS}, we introduce discrete-time evolution on the level of their moments what results in the integrable non-commutative  multidimensional discrete-time Toda lattice equations. 

\section{Preliminaries}
\label{sec:QD,NCH}

\subsection{Orthogonal polynomials and the Toda lattice equations}

Let $\mu$ be a positive measure on the real line, consider the corresponding system of monic polynomials subject to orthogonality relations
\begin{equation*} \label{eq:orthogonality}
	\int_\mathbb{R} Q_r(x) Q_s(x) d\mu(x) = 0, \qquad r\neq s, \qquad Q_s(x) = x^s + \dots \; . 
\end{equation*}
The polynomials can be written down in terms of the moments
\begin{equation*} \label{eq:nu}
	\nu_k = \int_{\mathbb{R}} x^k d\mu(x), 
\end{equation*}
for which we assume that all they exist, as
\begin{equation}  \label{eq:Q-OP}
	Q_s(x) = \frac{1}{D_s} \left| \begin{matrix}
		\nu_0 & \nu_1 & \nu_2 & \dots & \nu_s \\
		\nu_1 & \nu_2 & \nu_3 & \dots & \nu_{s+1} \\
		\vdots & \vdots & \ddots  & & \vdots \\
		\nu_{s-1} & \nu_s & \nu_{s+1} & \dots & \nu_{2s-1} \\
		1 & x & x^2 & \dots & x^s
	\end{matrix} \right| , \quad \text{where} \quad 	D_s = \left| \begin{matrix}
		\nu_0 & \nu_1 &  \dots & \nu_{s-1} \\
		\nu_1 & \nu_2 &  \dots & \nu_{s} \\
		\vdots & \vdots & \ddots  & \vdots \\
		\nu_{s-1} & \nu_s &  \dots & \nu_{2s-2} 
	\end{matrix} \right| .
\end{equation}

It is well known that the orthogonality  conditions imply the three-term recurrence relation~\cite{Akhiezer,Szego,Ismail}
\begin{equation} \label{eq:3-term}
	xQ_s(x) = Q_{s+1}(x) + b_s Q_s(x) + a_s Q_{s-1}(x), % \qquad Q_{-1}\equiv 0, \qquad Q_{0}\equiv 1,
\end{equation}
where the coefficients $a_s$, $b_s$ can be also written in terms of certain determinants built out of the moments.
Assume that measure undergoes evolution~\cite{Kac-vMoerbeke,Moser} of the form
\begin{equation*} \label{eq:mu-t}
	d\mu(x;t) = e^{-xt} d\mu(x), \qquad t \in \RR_+,
\end{equation*}
well known in the theory of continuous-time birth and death processes~\cite{LedermannReuter,KarlinMcGregor}. Then the moments evolve according to the law
\begin{equation*} 
	\dot{\nu}_s(t)= -\nu_{s+1}(t) ,
\end{equation*}
and the coefficients of the three-term relation satisfy the following system of differential equations
\begin{align*} %\label{eq:TF-1}
	\dot{a}_s(t) & =  a_s(t)(b_{s-1}(t) - b_s(t)), \\
	%\label{eq:TF-2}
	\dot{b}_s(t) & =  a_s(t) - a_{s+1}(t),
\end{align*}
which are the Toda lattice equations~\cite{Toda-TL} in the form given by Flaschka~\cite{Flaschka}. Equivalently, the determinants $D_s(t)$ satisfy Toda lattice equations in the Hirota form 
\begin{equation*} \label{eq:CTTL}
	\ddot{D}_s(t) D_s(t) = \dot{D}_s(t)^2 + D_{s+1}(t) D_{s-1}(t).
\end{equation*}

When the measure undergoes the discrete-time evolution of the form
\begin{equation*}
	d\mu_t(x) = x^t d\mu(x), \qquad t\in \mathbb{N}_0,	
\end{equation*} 
then the moments evolve according to the law 
\begin{equation} \label{eq:mu-t}
	\nu_{s;t} = \nu_{s+t}, 
\end{equation}
and the corresponding determinants 
\begin{equation} \label{eq:D-t}
	D_{s;t} = 	 \left| \begin{matrix}
		\nu_t & \nu_{t+1} &  \dots & \nu_{t+s-1} \\
		\nu_{t+1} & \nu_{t+2} &  \dots & \nu_{t+s} \\
		\vdots & \vdots & \ddots  & \vdots \\
		\nu_{t+s-1} & \nu_{t+s} &  \dots & \nu_{t+2s-2} 
	\end{matrix} \right| ,
	\end{equation}
satisfy  the discrete-time Toda lattice equations \cite{Hirota-2dT}
\begin{equation} \label{eq:DTTL-D}
	D_{s-1;t+1} D_{s+1;t-1} = D_{s;t-1} D_{s;t+1} - D_{s;t}^2 \, .
\end{equation} 
Another standard form of the equations, which we will refer to in Section~\ref{sec:MTS}, is obtained when we replace, using the Jacobi--Desnanot identity~\cite{Hirota-book}, the three-term relation \eqref{eq:3-term} in constant time by its time-dependent versions~\cite{AptekarevDerevyaginMikiVanAssche} for the polynomials
\begin{equation}  \label{eq:Q-OP-t}
	Q_{s;t}(x) = \frac{1}{D_{s;t}} \left| \begin{matrix}
		\nu_t & \nu_{t+1} & \nu_{t+2} & \dots & \nu_{t+s} \\
		\nu_{t+1} & \nu_{t+2} & \nu_{t+3} & \dots & \nu_{t+s+1} \\
		\vdots & \vdots & \ddots  & & \vdots \\
		\nu_{t+s-1} & \nu_{t+s} & \nu_{t+ s+1} & \dots & \nu_{t+ 2s-1} \\
		1 & x & x^2 & \dots & x^s
	\end{matrix} \right| , 
\end{equation}
which take the form
\begin{align} \label{eq:lin-A}
	xQ_{s;t+1}(x) & = Q_{s+1;t}(x) + Q_{s,t}(x) A_{s;t}, \qquad A_{s;t} = \frac{ D_{s+1;t+1} D_{s;t} }{D_{s+1;t} D_{s;t+1}},\\ \label{eq:lin-B}
	Q_{s;t+1} & = Q_{s;t} - Q_{s-1;t+1} B_{s;t}, \qquad B_{s;t} = \frac{ D_{s+1;t} D_{s-1;t+1}} {D_{s;t} D_{s;t+1}}.
\end{align}
Then the compatibility condition between the above system gives equations
\begin{equation} \label{eq:dt-T-AB}
	A_{s;t+1}  + B_{s;t+1} = A_{s;t} + B_{s+1;t}, \qquad A_{s-1;t+1} B_{s;t+1} = A_{s;t} B_{s;t},
\end{equation}
which follow from the definitions given in~\eqref{eq:lin-A}-\eqref{eq:lin-B} and by equation \eqref{eq:DTTL-D}.

\subsection{The non-commutative Hirota system}
	The non-commutative Hirota system was introduced by Nimmo \cite{Nimmo-NCKP} as the compatibility conditions of the linear equations
	\begin{equation*} 
		\bPhi_{\bs + \be_i} - \bPhi_{\bs + \be_j} =  \bPhi_{\bs} U^{(ij)}_{\bs},  
		\qquad i \ne j ,
	\end{equation*}
	for the wave-function $\bPhi \colon \ZZ^N \to \VV$ taking values in a (right) vector space over a division ring $\DD$; here indices $i,j$ range from $1$ to $N$, $\bs \in \ZZ^N$ and $\be_i$ is an element of the canonical basis of the lattice. Then the functions $U^{(ij)}: \ZZ^N\to \DD$ are solutions of the equations
	\begin{equation*}
		U^{(ij)}_{\bs} +U^{(ji)}_{\bs} = 0, \quad
		U^{(ij)}_{\bs} +U^{(jk)}_{\bs} +U^{(ki)}_{\bs} =0, \quad
		U^{(ki)}_{\bs} U^{(kj)}_{\bs+\be_i} = U^{(kj)}_{\bs} U^{(ki)}_{\bs+\be_j}  ,
	\end{equation*}
with $ i,j,k$ distinct.
	By the third part of the above system there exist potentials $\rho^{(i)}:\ZZ^N\to \DD$, $i=1,\dots ,N$, such that
	\begin{equation*}
		U^{(ij)}_{\bs} = [\rho^{(i)}_{\bs}]^{-1} \rho^{(i)}_{\bs+\be_j}	,
	\end{equation*}
	and then the first two parts give the potential form 
	\begin{equation} 
		\label{eq:NC-H-rho}
		[\rho^{(i)}_{\bs}]^{-1}	\rho^{(i)}_{\bs+\be_j} + [\rho^{(j)}_{\bs}]^{-1}	\rho^{(j)}_{\bs+\be_i} = 0, \qquad 
		[\rho^{(i)}_{\bs}]^{-1}	\rho^{(i)}_{\bs+\be_j} + [\rho^{(j)}_{\bs}]^{-1}	\rho^{(j)}_{\bs+\be_k} + [\rho^{(k)}_{\bs}]^{-1}	\rho^{(k)}_{\bs+\be_i} = 0,
	\end{equation}
	with $i,j,k$ distinct, of the non-commutative Hirota system.
	
	When $\DD=\Bbbk$ is \emph{commutative} then the functions $\rho^{(i)}$ can be parametrized in terms of a single potential $\tau : \ZZ^N \to \Bbbk$ such that
	\begin{equation*} 
		\rho^{(i)}_{\bs} = (-1)^{\sum_{k<i} s_k}
		\frac{\tau_{\bs + \be_i}}{\tau_{\bs}}
	\end{equation*}
	and the nonlinear system reads 
	\begin{equation} \label{eq:Hirota-com}
		\tau_{\bs + \be_i} \tau_{\bs + \be_j + \be_k} - 
		\tau_{\bs + \be_j} \tau_{\bs + \be_i + \be_k} + 
		\tau_{\bs + \be_k} \tau_{\bs + \be_i + \be_j}=0,
		\qquad i< j <k ,
	\end{equation}
	which is the standard form of the Hirota system~\cite{Hirota,Miwa}. 

\begin{Rem}
As it was mentioned above, the Hirota system contains most known discrete integrable equations as specifications or symmetry reductions~\cite{KNS-rev}. The role of this equation for the theory of integrable systems can be compared to the importance (by Cayley's theorem) of the symmetric group in the theory of groups. Interestingly, the Hirota system is deeply connected with root lattices of type $A$ \cite{Dol-AN}, whose affine Weyl groups are semidirect products of translations and permutations \cite{Humphreys}. This has strong relations with $\mathfrak{gl}(\infty)$ structure of symmetries \cite{DKJM} of the KP (or A-KP) hierarchy of integrable equations. The integrable hierarchies of $\mathfrak{so}(\infty)$ or $\mathfrak{sp}(\infty)$ structure of symmetries lead to the discrete B-KP (or Miwa \cite{Miwa}) and discrete C-KP (or Kashaev \cite{Kashaev-LMP,Schief-CKP}) equations, whose root lattice structure was presented in~\cite{Doliwa-Desargues-reductions}.
\end{Rem}

\subsection{Quasideterminants}

In this Section we recall the definition and basic properties of quasideterminants~\cite{Quasideterminants-GR1}. 
Results of this section are well known (in our presentation we will follow closely \cite{Quasideterminants}), however Lemma~\ref{lem:quasi-NC-H} is new.
\begin{Def}
	Given square matrix $X=(x_{ij})_{i,j=1,\dots,n}$ with formal entries $x_{ij}$. In the free division ring~\cite{Cohn} generated by the set $\{ x_{ij}\}_{i,j=1,\dots,n}$ consider the formal inverse matrix $Y=X^{-1}= (y_{ij})_{i,j=1,\dots,n}$ to $X$.
	The $(i,j)$th quasideterminant $|X|_{ij}$ of $X$ is the inverse $(y_{ji})^{-1}$ of the $(j,i)$th element of $Y$, and is often written explicitly as
	\begin{equation}
		|X|_{ij} = \left| \begin{matrix}
			x_{11} & \cdots & x_{1j} & \cdots & x_{1n} \\
			\vdots &        & \vdots &        & \vdots \\
			x_{i1} & \cdots & \boxed{x_{ij}} & \cdots & x_{in}  \\
			\vdots &        & \vdots &        & \vdots \\
			x_{n1} & \cdots & x_{nj} & \cdots & x_{nn} 
		\end{matrix} \right| .
	\end{equation}	
\end{Def}

Quasideterminants can be computed using the following recurrence relation. When $n=1$ then the quasideterminant of $X = (x)$ is just the single matrix entry $|X|_{11}=x$. For $n\geq 2$ let $X^{ij}$ be the square matrix obtained from $X$ by deleting the $i$th row and the $j$th column (with index $i/j$ skipped from the row/column enumeration), then
\begin{equation} \label{eq:QD-exp}
	|X|_{ij} = 
	x_{ij} - \sum_{\substack{ i^\prime \neq i \\ j^\prime \neq j }} x_{i j^\prime} (|X^{ij}|_{i^\prime j^\prime })^{-1} x_{i^\prime j}
\end{equation}
provided all terms in the right-hand side are defined.
\begin{Rem}
	When the elements of the matrix $X$ commute between themselves, what we denote by placing the letter $c$ over the equality sign, then the familiar matrix inversion formula gives
	\begin{equation} \label{eq:qdet-det}
		|X|_{ij} \stackrel{c}{=} (-1)^{i+j}\frac{\det X}{\det X^{ij}}.
	\end{equation}
\end{Rem}
\begin{Ex}
	Quasideterminants of generic $2\times 2$ matrix
	\begin{equation*}
		X = \left( \begin{matrix}
			x_{11} & x_{12} \\
			x_{21} & x_{22}
		\end{matrix}\right)
	\end{equation*}	
	read as follows
	\begin{gather*}
		|X|_{11} =  \left| \begin{matrix}
		x_{11} & x_{12} \\
			x_{21} & x_{22}
		\end{matrix}\right| = x_{11} - x_{12} x_{22}^{-1} x_{21}, \quad 
		|X|_{12} = \left| \begin{matrix}
		 x_{11}	 & \boxed{x_{12}} \\
			x_{21} & x_{22}
		\end{matrix}\right| = x_{12} - x_{11} x_{21}^{-1} x_{22}, \\ 
		|X|_{21} = \left| \begin{matrix}
		x_{11} & x_{12} \\
				\boxed{x_{21}} & x_{22}
		\end{matrix}\right| = x_{21} - x_{22} x_{12}^{-1} x_{11}, \quad 
		|X|_{22} = \left| \begin{matrix}
			x_{11} & x_{12} \\
			x_{21} &\boxed{x_{22}}
		\end{matrix}\right| = x_{22} - x_{21} x_{11}^{-1} x_{12}.
	\end{gather*}
\end{Ex}

Let us collect basic properties of the quasideterminants which will be used in the sequel. We start from \emph{row and column operations}:
\begin{itemize}
	\item 
	A permutation of the rows or columns of a quasideterminant does not change its value.
	\item Let the matrix $\tilde{X}$ be obtained from the matrix $X$ by multiplying the $k$th row by the element $\lambda$ of the division ring from the left, then 
	\begin{equation}
		|\tilde{X}|_{ij} = \begin{cases} \lambda|X|_{ij} & \text{if} \quad i = k, \\
			|X|_{ij} & \text{if} \quad i\neq k \quad \text{and} \; \lambda \; \text{is invertible} . \end{cases}
	\end{equation}
	
	\item Let the matrix $\hat{X}$ be obtained from the matrix $X$ by multiplying the $k$th column by the element $\mu$ of the division ring from the right, then 
	\begin{equation}
		|\hat{X}|_{ij} = \begin{cases} |X|_{ik} \, \mu & \text{if} \quad j = k, \\
			|X|_{ij} & \text{if} \quad j\neq k \quad \text{and} \; \mu \; \text{is invertible} . \end{cases}
	\end{equation}
	
	\item
	Let the matrix $\tilde{X}$ be constructed by adding to some row of the matrix $X$ its $k$th row multiplied by a scalar $\lambda$ from the left, then
	\begin{equation}
		|X|_{ij} = |\tilde{X}|_{ij}, \qquad i = 1, \dots , k-1, k+1, \dots , n, \quad j=1,\dots , n.
	\end{equation}
	
	\item Let the matrix $\hat{X}$ be constructed by addition to some column of the matrix $X$ its $l$th column multiplied by a scalar $\mu$ from the right, then
	\begin{equation}
		|X|_{ij} = |\hat{X}|_{ij}, \qquad i = 1, \dots , n, \quad j=1,\dots , l-1, l+1 , \dots ,n.
	\end{equation}
	
\end{itemize}

The second group of properties, called \emph{the homological relations}, establishes relations between quasideterminants of elements in the same row or column.
\begin{itemize}
	\item Row homological relations:
	\begin{equation} \label{eq:row-hom}
		-|X|_{ij} \cdot |X^{i k}|_{sj}^{-1} = |X|_{ik} \cdot |X^{ij}|_{sk}^{-1}, \qquad s\neq i.
	\end{equation}
	\item Column homological relations:
	\begin{equation} \label{eq:chr}
		- |X^{kj}|_{is}^{-1} \cdot |X|_{ij}=  |X^{ij}|_{ks}^{-1} \cdot |X|_{kj} , \qquad s\neq j.
	\end{equation}
\end{itemize}
\begin{Ex}
	The row homological relations between quasideterminants of elements $x_{32}$ and $x_{33} $ in the third row with auxiliary elements from the second row read
\begin{equation}
	\left| \begin{matrix}
		x_{11} &  x_{12} & x_{13} \\
		x_{21} & x_{22} & x_{23}  \\
		x_{31} & \boxed{x_{32}} & x_{33} 
	\end{matrix} \right| 
	 \left| \begin{matrix}
	x_{11} &  x_{12}  \\
	x_{21} & \boxed{x_{22}  } 
\end{matrix} \right|^{-1} = - 	\left| \begin{matrix}
x_{11} &  x_{12} & x_{13} \\
x_{21} & x_{22} & x_{23}  \\
x_{31} & x_{32} & \boxed{x_{33} }
\end{matrix} \right| 
 \left| \begin{matrix}
x_{11} &  x_{13}  \\
x_{21} & \boxed{x_{23}  } 
\end{matrix} \right|^{-1}.
\end{equation}		
	
\end{Ex}

Finally we present \emph{Sylvester's identity} for quasideterminants. 
Let $X_0 = (x_{ij})$, $i,j = 1,\dots ,k$, be a submatrix of $X$ that is invertible. For $p,q = k+1,\dots ,n$ set
\begin{equation*}
	c_{pq} = \begin{vmatrix}
		&&& x_{1q} \\
		& X_0 & & \vdots\\
		&&& x_{kq} \\
		x_{p1} & \dots & x_{pk} & \boxed{x_{pq}} 
	\end{vmatrix} \; ,
\end{equation*}
and consider the $(n-k) \times (n-k)$ matrix $C = (c_{pq})$, $p,q = k+1,\dots , n$. Then for $i,j = k+1,\dots , n$,
\begin{equation}
	|X|_{ij} = |C|_{ij} \; .
\end{equation}
We will use its simplest version for two auxiliary rows and columns, which in the commutative case reduces to the Desnanot--Jacobi identity, known also as the Dodgson condensation rule.

\begin{Ex} 	For $n=4$ and $k=2$ we have, for example,
	\begin{equation}
	\left| \begin{matrix}
		x_{11} &  x_{12} & x_{13} & x_{14}\\
		x_{21} & x_{22} & x_{23} & x_{24} \\
		x_{31} & x_{32} & x_{33} & x_{34} \\
		x_{41} & x_{42} & x_{43} & \boxed{x_{44} }	
	\end{matrix} \right| =
	\left| \begin{matrix}  	\left| \begin{matrix}
		x_{11} &  x_{12} & x_{13} \\
		x_{21} & x_{22} & x_{23}  \\
		x_{31} & x_{32} & \boxed{x_{33} }
	\end{matrix} \right|  &
\left| \begin{matrix}
	x_{11} &  x_{12} & x_{14} \\
	x_{21} & x_{22} & x_{24}  \\
	x_{31} & x_{32} & \boxed{x_{34} }
\end{matrix} \right| \\
\left| \begin{matrix}
	x_{11} &  x_{12} & x_{13} \\
	x_{21} & x_{22} & x_{23}  \\
	x_{41} & x_{42} & \boxed{x_{43} }
\end{matrix} \right|  &
\boxed{
\left| \begin{matrix}
	x_{11} &  x_{12} & x_{14} \\
	x_{21} & x_{22} & x_{24}  \\
	x_{41} & x_{42} & \boxed{x_{44} }
\end{matrix} \right| }
	\end{matrix} \right| 
\end{equation}		

\end{Ex}
\begin{Rem}
	The Sylvester identity is usually used in conjunction with row/column permutations.
\end{Rem}
Finally, we present the following consequence of the Sylvester identity, which will be used in the next section to provide large class of solutions to the non-commutative Hirota system.
\begin{Lem} \label{lem:quasi-NC-H}
	Let $A$, $B$, $C$ be row vectors of length $n$, $D$ be square matrix of size $n\times n$, and $E_1$, $E_2$ be column vectors of height $n$, then
\begin{equation}
		\left| \begin{matrix}
		A & \boxed{ a_1} \\
		D & E_1 
	\end{matrix} \right|^{-1} 	\left| \begin{matrix}
		A &  a_1 & \boxed{a_2} \\
		B & b_1 & b_2  \\
		D & E_1 &E_2 
	\end{matrix} \right| +
	\left| \begin{matrix}
	B & \boxed{ b_1} \\
	D & E_1 
\end{matrix} \right|^{-1} 	\left| \begin{matrix}
	B &  b_1 & \boxed{b_2} \\
	C & c_1 & c_2  \\
	D & E_1 &E_2 
\end{matrix} \right| +
	\left| \begin{matrix}
	C & \boxed{ c_1} \\
	D & E_1 
\end{matrix} \right|^{-1} 	\left| \begin{matrix}
	A &  a_1 & a_2 \\
	C & c_1 & \boxed{c_2}  \\
	D & E_1 &E_2 
\end{matrix} \right| =0,
\end{equation}
provided that all the quasideterminants above exist.	
\end{Lem}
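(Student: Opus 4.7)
The plan is to apply Sylvester's identity with two auxiliary rows and columns to each of the three $(n+2)\times(n+2)$ quasideterminants on the left, taking the $n\times n$ block $D$ as the invertible submatrix $X_0$. For each of the three summands I would first permute the rows of the big matrix so that $D$ occupies the top-left corner; by the row-permutation property this does not alter the value of the quasideterminants.

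Introducing the auxiliary $(n+1)\times(n+1)$ quasideterminants
\[
N_X^{(i)} := \left| \begin{matrix} X & \boxed{x_i} \\ D & E_i \end{matrix} \right|, \qquad X\in\{A,B,C\},\; i\in\{1,2\},
\]
and using the explicit formula for a $2\times 2$ quasideterminant recalled in the excerpt, Sylvester's identity turns each of the three large quasideterminants into an expression of the shape $N_X^{(2)} - N_X^{(1)}\bigl[N_Y^{(1)}\bigr]^{-1}N_Y^{(2)}$, where the ordered pairs $(X,Y)$ run through $(A,B)$, $(B,C)$, and (after swapping the rows $A$ and $C$ so that the boxed entry $c_2$ sits at the bottom right) $(C,A)$. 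Pre-multiplying each of the resulting three identities from the left by the inverse quasideterminant $[N_X^{(1)}]^{-1}$ standing in front of it in the statement of the lemma, the factor $N_X^{(1)}$ cancels against its inverse and one obtains $[N_X^{(1)}]^{-1}N_X^{(2)} - [N_Y^{(1)}]^{-1}N_Y^{(2)}$ in each case; the cyclic sum of these three differences telescopes to zero.

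The main obstacle is purely combinatorial: in each summand one must correctly identify which of the four positions of the reduced $2\times 2$ Sylvester matrix $|C|_{ij}$ is singled out, so that the right expansion is applied (either $|C|_{12} = c_{12} - c_{11}c_{21}^{-1}c_{22}$ or $|C|_{22} = c_{22} - c_{21}c_{11}^{-1}c_{12}$). For the first two summands the boxed entry already sits in the top row of the $(n+2)\times(n+2)$ matrix, but in the third summand it sits in the middle row, which is precisely why a preliminary row swap between $A$ and $C$ is required. Beyond this bookkeeping the argument is a direct formal cancellation, and no assumption is used other than the invertibility of the three quasideterminants $N_A^{(1)},N_B^{(1)},N_C^{(1)}$, which is guaranteed by the hypothesis that every quasideterminant in the statement exists.
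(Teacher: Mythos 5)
Your argument is correct, and it takes a genuinely different (and leaner) route than the paper's. The paper proves the lemma by evaluating a single bordered $(n+3)\times(n+3)$ quasideterminant --- the array augmented by the extra column $(0,0,1,\boldsymbol{0})^{T}$ --- in two ways: once by Sylvester's identity with respect to the first two rows and the last two columns, and once via column homological relations, with a further application of the homological relations needed to turn the resulting identity into the stated one. You instead apply Sylvester's identity directly to each of the three $(n+2)\times(n+2)$ quasideterminants, pivoting on the block $D$; with your notation $N_X^{(i)}$ the three summands condense to $N_A^{(2)}-N_A^{(1)}[N_B^{(1)}]^{-1}N_B^{(2)}$, $N_B^{(2)}-N_B^{(1)}[N_C^{(1)}]^{-1}N_C^{(2)}$ and $N_C^{(2)}-N_C^{(1)}[N_A^{(1)}]^{-1}N_A^{(2)}$, and after left multiplication by the prefactors $[N_X^{(1)}]^{-1}$ the cyclic sum telescopes to zero. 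This buys you two things: you avoid the bordering trick and the homological relations entirely, and the telescoping makes the cyclic (Hirota-type) structure of the identity completely transparent; the paper's bordered-matrix device is heavier but is the same mechanism it reuses elsewhere. One minor bookkeeping slip: swapping the rows $A$ and $C$ in the third summand puts the boxed $c_2$ in the \emph{top} auxiliary row, not the bottom-right corner; but since you may equally well leave the rows in place and use the $(2,2)$-expansion $c_{22}-c_{21}c_{11}^{-1}c_{12}$ of the condensed $2\times2$ matrix, both conventions yield the same expression $N_C^{(2)}-N_C^{(1)}[N_A^{(1)}]^{-1}N_A^{(2)}$, so nothing in the argument is affected.
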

\begin{proof}
Calculate the quasideterminant
\begin{equation*}
	\left| \begin{matrix}
		A &  a_1 & a_2 & \boxed{0} \\
		B & b_1 & b_2 & 0  \\
		C & c_1 & c_2 & 1  \\
		D & E_1 &E_2 & \boldsymbol{0}
	\end{matrix} \right| 
\end{equation*}
using Sylvester's identity with respect to first two rows and the two last columns; here $\boldsymbol{0}$ is the column of an appropriate height filled with zeros. On the other hand, column homological relations allow to simplify the quasideterminant, what leads to
\begin{equation*}
-	\left| \begin{matrix}
		A &  a_1 & \boxed{a_2 } \\
		B & b_1 & b_2   \\
		D & E_1 &E_2 
	\end{matrix} \right| 
	\left| \begin{matrix}
	B &  b_1 & \boxed{b_2 } \\
	C & c_1 & c_2   \\
	D & E_1 &E_2 
\end{matrix} \right|^{-1} = 	\left| \begin{matrix}
A &  a_1  & \boxed{0} \\
C & c_1 &  1  \\
D & E_1  & \boldsymbol{0}
\end{matrix} \right|  - 	\left| \begin{matrix}
A &  a_1 &\boxed{ a_2 } \\
C & c_1 & c_2  \\
D & E_1 &E_2 
\end{matrix} \right| 	\left| \begin{matrix}
B & b_1 & \boxed{b_2   }\\
C & c_1 & c_2   \\
D & E_1 &E_2 
\end{matrix} \right|^{-1} 	\left| \begin{matrix}
B & b_1 & \boxed{0 } \\
C & c_1 & 1  \\
D & E_1  & \boldsymbol{0}
\end{matrix} \right| .
\end{equation*}
The final conclusion follows by application of appropriate column homological relations once again.
\end{proof}
By transposition we obtain the following dual result. The Reader can prove it directly.
\begin{Cor} \label{cor:quasi-NC-H} 
	Let $A$, $B$, $C$ be column vectors of height $n$, $D$ be square matrix of size $n\times n$, and $E_1$, $E_2$ be row vectors of length $n$, then
\begin{equation}
		\left| \begin{matrix}
		A &  B & D  \\
		a_1 & b_1 &  E_1  \\
		\boxed{a_2} & b_2&E_2 
	\end{matrix} \right| \left| \begin{matrix}
	A &  D\\
	\boxed{ a_1} & E_1 
\end{matrix} \right|^{-1}  +
		\left| \begin{matrix}
		B & C  & D \\
		  b_1& c_1 & E_1  \\
	\boxed{b_2}	 & c_2 &E_2 
	\end{matrix} \right| \left| \begin{matrix}
	B & D \\
	\boxed{ b_1} & E_1 
\end{matrix} \right|^{-1} +
	\left| \begin{matrix}
		A &  C & D \\
		a_1 & c_1 & E_1  \\
		a_2 & \boxed{c_2} &E_2 
	\end{matrix} \right| 	\left| \begin{matrix}
	C & D \\
	\boxed{ c_1} & E_1 
\end{matrix} \right|^{-1}  =0,
\end{equation}
provided that all the quasideterminants above exist.
\end{Cor}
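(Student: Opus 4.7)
The plan is to mirror the proof of Lemma~\ref{lem:quasi-NC-H}, with the roles of rows and columns interchanged throughout. Concretely, I will evaluate the auxiliary $(n+3)\times(n+3)$ quasideterminant
\begin{equation*}
\left| \begin{matrix}
A & B & C & D \\
a_1 & b_1 & c_1 & E_1 \\
a_2 & b_2 & c_2 & E_2 \\
\boxed{0} & 0 & 1 & \boldsymbol{0}
\end{matrix} \right|
\end{equation*}
in two different ways, where $\boldsymbol{0}$ denotes the row of length $n$ whose entries are all zero. The column $A$, the column $B$, the column $C$ and the matrix $D$ form the first ``block row''; the scalars and $E_1,E_2$ form the next two rows; the last row contains a single $1$ and the box on $0$.

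First, I apply Sylvester's identity with respect to the two first columns and the two last rows --- the mirror of the choice used in the proof of Lemma~\ref{lem:quasi-NC-H}. This rewrites the quasideterminant as a $2\times 2$ quasideterminant whose four entries are themselves $(n+2)\times(n+2)$ quasideterminants built from $A,B,C,D,E_1,E_2$ and the scalars. One of these entries is literally the $3\times 3$ quasideterminant $\left| \begin{matrix} A & B & D \\ a_1 & b_1 & E_1 \\ \boxed{a_2} & b_2 & E_2 \end{matrix} \right|$ appearing in the first summand of the Corollary; another is the analogue involving $B$ and $C$ in place of $A$ and $B$; the remaining two contain the sparse auxiliary row $(0,1,\boldsymbol{0})$ in place of one of the scalar rows.

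Second, I invoke the row homological relations \eqref{eq:row-hom} to simplify each $3\times 3$ quasideterminant containing the sparse auxiliary row. Because that row has exactly one nonzero entry, equal to $1$, a single application of \eqref{eq:row-hom} extracts on the right the inverse of the $(n+1)\times(n+1)$ quasideterminant $\left| \begin{matrix} A & D \\ \boxed{a_1} & E_1 \end{matrix} \right|$, and its $B$- and $C$-analogues --- precisely the inverse factors multiplying the $3\times 3$ quasideterminants on the right in the statement. Assembling the three contributions produced by the $2\times 2$ Sylvester expansion and matching the box positions yields the identity of the Corollary.

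The main technical obstacle will be the bookkeeping of box positions and of the signs introduced by the homological relations, in particular verifying that the three summands coming out of the Sylvester expansion align, after simplification, with the three pairs $(A,B)$, $(B,C)$, $(A,C)$ in the statement. As an independent check I would rewrite Lemma~\ref{lem:quasi-NC-H} over the opposite division ring $\DD^{\mathrm{op}}$, using the identification of $|X^T|_{ij}$ computed in $\DD^{\mathrm{op}}$ with $|X|_{ji}$ computed in $\DD$; this translates the Lemma verbatim into the Corollary without further computation and confirms the sign conventions.
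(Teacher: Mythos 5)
Your proposal is correct, and your ``independent check'' is in fact precisely the paper's own proof: the paper dispatches this Corollary with the single remark that it follows from Lemma~\ref{lem:quasi-NC-H} by transposition (equivalently, by passing to the opposite division ring, under which $|X|_{ij}$ becomes $|X^{T}|_{ji}$ and all products reverse order), leaving the direct verification to the reader. Your primary route --- mirroring the proof of the Lemma via the auxiliary quasideterminant bordered by the sparse row $(0,0,1,\boldsymbol{0})$ --- is also viable, but one detail of your bookkeeping is off. With the bordering taken along the first two columns and the last two rows, the pivot block is $\bigl(\begin{smallmatrix} C & D \\ c_1 & E_1 \end{smallmatrix}\bigr)$, so the four entries of the $2\times 2$ Sylvester expansion are the quasideterminants built on the pairs $(A,C)$ and $(B,C)$ (two of them containing the sparse auxiliary row); the $(A,B,D)$ quasideterminant of the first summand is \emph{not} among them. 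It arises instead from the second evaluation of the auxiliary quasideterminant, i.e.\ from simplifying its sparse last row by the homological relations, exactly as the $(A,B)$ term appears on the left-hand side of the intermediate identity in the paper's proof of Lemma~\ref{lem:quasi-NC-H}. This is a recoverable slip in the sketch rather than a fatal flaw, and in any case the transposition argument you give settles the Corollary on its own.
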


\section{Non-commutative multiple bi-orthogonal polynomials} \label{sec:NC-multiple-BOP}
Consider $r$ infinite matrices $(\nu^{(k)}_{ij})_{i,j \in \NN_0}$, $k=1,\dots ,r$, which provide data for the solution, referred to in the following part of the work as formal bi-moments. 
%The free field $\mathcal{R}$ generated by the bi-moments is endowed with natural anti-involution $^*$ defined as the automorphism sending each $y^{(k)}_{ij}$ to itself. 
Let $\mathcal{R}[x]$ be the ring of polynomials in the commutative indeterminate $x$, with coefficients in the free field $\mathcal{R}$ generated by the bi-moments. 
%The anti-involution can be extended to $\mathcal{R}[x]$ by putting $\left(\sum_{i}a_i x^i \right)^* = \sum_{i} a_i^* x^i $ . 
Define $r$ bilinear forms $\left( \cdot \, ,  \, \cdot \right)_{(k)}$, $k=1,\dots ,r$, in $\mathcal{R}[x]$ by 
\begin{equation} \label{eq:s-forms}
	\Bigl( \sum_i a_i x^i , \sum_j b_j x^j \Bigr)_{(k)} = \sum_{i,j} a_i \nu_{i,j}^{(k)} b_j.
\end{equation}
\begin{Rem}
The results presented in this Section may be interpreted as formal theory of matrix multiple bi-orthogonal polynomials originating from appropriate matrix-valued measures $d\mu^{(k)}(x,y)$ on the plane, where the bi-moments are defined as $\nu_{i,j}^{(k)} = \int x^i y^j d\mu^{(k)}(x,y)$, where however more complicated formulas are possible as well. This makes the approach related to the so called direct linearization framework~\cite{FokasAblowitz,Nijhoff,FuNijhoff} of the theory of integrable systems or to the non-local $\bar{\partial}$-dressing method~\cite{AYF,ZakhMan,DMS}, which generalize the Riemann-Hilbert problem technique.

\end{Rem}

Given $\bs = (s_1, \dots , s_r) \in \NN_0^r $, denote 
$|\bs | = s_1 + \dots + s_r$, and define $|\bs| \times (|\bs|+1)$ matrix $\mathcal{M}_{\bs}$ consisting of $r$ row-blocks being appropriate parts $(\nu^{(k)}_{ij})_{\, 0\leq i \leq s_k-1, \, 0\leq j \leq |\bs|}$ of the bi-moment matrices
\begin{equation*}
	\mathcal{M}_{\bs} = \left( \begin{matrix}
		\nu_{0,0}^{(1)} & \nu_{0,1}^{(1)} &  \dots & \nu_{0,|\bs|}^{(1)}\\
		\vdots & \vdots & \ddots   & \vdots \\
		\nu_{s_1-1,0}^{(1)} & \nu_{s_1 -1,1}^{(1)} &  \dots & \nu_{s_1 - 1,|\bs|}^{(1)} \\
		\cdots & \cdots &   & \cdots \\
		\nu_{0,0}^{(r)} & \nu_{0,1}^{(r)} &  \dots & \nu_{0,|\bs|}^{(r)}\\
		\vdots & \vdots & \ddots   & \vdots \\
		\nu_{s_r-1,0}^{(r)} & \nu_{s_r -1,1}^{(r)} &  \dots & \nu_{s_r - 1,|\bs|}^{(r)} 
	\end{matrix} \right) \; .
\end{equation*} 
Supplement the matrix $\mathcal{M}_{\bs}$ by the row
$ (1, x, x^2,  \dots , x^{|\bs|}) $, and define monic polynomial $Q_{\bs}(x)$ as quasideterminant 
of such square matrix with respect to the rightmost element of that additional row
\begin{equation} \label{eq:pol-Q}
	Q_{\bs}(x) = \left| \begin{matrix}
		\nu_{0,0}^{(1)} & \nu_{0,1}^{(1)} &  \dots & \nu_{0,|\bs|}^{(1)}\\
		\vdots & \vdots & \ddots   & \vdots \\
		%		\nu_{s_1-1,0}^{(1)} & \nu_{s_1 -1,1}^{(1)} &  \dots & \nu_{s_1 - 1,|\bs|}^{(1)} \\
		\nu_{s_1-1,0}^{(1)} & \nu_{s_1 -1,1}^{(1)} &  \dots & \nu_{s_1 - 1,|\bs|}^{(1)} \\
		\cdots & \cdots &   & \cdots \\
		\nu_{0,0}^{(r)} & \nu_{0,1}^{(r)} &  \dots & \nu_{0,|\bs|}^{(r)}\\
		\vdots & \vdots & \ddots   & \vdots \\
		\nu_{s_r-1,0}^{(r)} & \nu_{s_r -1,1}^{(r)} &  \dots & \nu_{s_r - 1,|\bs|}^{(r)}  \\
		1 & x &  \dots & \boxed{x^{|\bs|}}
	\end{matrix} \right| \, .
\end{equation}

\begin{Prop}
	The polynomials $Q_{\bs}(x)$ satisfy the following multiple orthogonality relations
	\begin{equation} \label{eq:mult-orthog-Q}
		\bigl( x^i , Q_{\bs}(x) \bigr)_{(k)} = 0, \qquad i = 0, \dots , s_k -1.
	\end{equation}
\end{Prop}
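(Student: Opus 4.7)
The strategy is to expand the quasideterminant in~\eqref{eq:pol-Q} along its bottom row, substitute the resulting coefficients of $Q_{\bs}(x)$ into the bilinear form~\eqref{eq:s-forms}, and recognise the outcome as a single entry of $\mathcal{N}\mathcal{N}^{-1}$, where $\mathcal{N}$ denotes the leftmost $|\bs|\times|\bs|$ block of $\mathcal{M}_{\bs}$. A useful way to picture it is that $\bigl(x^i,Q_{\bs}(x)\bigr)_{(k)}$ can be interpreted as the quasideterminant of the matrix obtained from~\eqref{eq:pol-Q} by replacing its bottom row $(1,x,\dots,x^{|\bs|})$ with the numerical row $(\nu^{(k)}_{i,0},\dots,\nu^{(k)}_{i,|\bs|})$; for $0\le i\le s_k-1$ this row already occurs inside $\mathcal{M}_{\bs}$, so we are facing the non-commutative analogue of the classical statement ``a determinant with two equal rows vanishes''.

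Concretely, applying the expansion rule~\eqref{eq:QD-exp} with $n=|\bs|+1$ gives
\begin{equation*}
Q_{\bs}(x)=x^{|\bs|}-\sum_{j=1}^{|\bs|}\sum_{i'=1}^{|\bs|} x^{\,j-1}\bigl(|\mathcal{N}|_{i',j}\bigr)^{-1}\bn_{i'},
\end{equation*}
where $\bn$ denotes the rightmost column of $\mathcal{M}_{\bs}$. Since $x$ is a commutative indeterminate, one can read off the coefficients directly and insert them into~\eqref{eq:s-forms}; after swapping the two finite summations this yields
\begin{equation*}
\bigl(x^i,Q_{\bs}(x)\bigr)_{(k)}=\nu^{(k)}_{i,|\bs|}-\sum_{i'=1}^{|\bs|}\Bigl(\sum_{J=1}^{|\bs|}\nu^{(k)}_{i,J-1}\bigl(|\mathcal{N}|_{i',J}\bigr)^{-1}\Bigr)\bn_{i'}.
\end{equation*}
The decisive step is to invoke the very definition of the quasideterminant, namely $\bigl(|\mathcal{N}|_{i',J}\bigr)^{-1}=(\mathcal{N}^{-1})_{J,i'}$. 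When $0\le i\le s_k-1$ the row $(\nu^{(k)}_{i,0},\dots,\nu^{(k)}_{i,|\bs|-1})$ is precisely the $m$-th row of $\mathcal{N}$ for $m=s_1+\dots+s_{k-1}+i+1$, so the inner bracket is the $(m,i')$-entry of $\mathcal{N}\mathcal{N}^{-1}=I$, equal to $\delta_{m,i'}$. Only the $i'=m$ term survives and contributes $\bn_m=\nu^{(k)}_{i,|\bs|}$, which cancels the leading boundary term.

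The computation is entirely formal, and the only minor point requiring attention is to keep the left/right ordering straight in the non-commutative ring: the scalars $\nu^{(k)}_{i,j}$ enter from the left (in agreement with their position in the expansion of the quasideterminant), while the entries $\bn_{i'}$ always sit on the far right, so the sum rearrangement is legitimate. Invertibility of $\mathcal{N}$ is built into the genericity convention on the free division ring generated by the bi-moments, so no separate regularity hypothesis is needed. I do not expect a real obstacle here --- the whole argument collapses to the identity $\mathcal{N}\mathcal{N}^{-1}=I$ once it is read through the row-column duality intrinsic to the definition of the quasideterminant; an alternative route via row operations (subtracting the repeated row and then re-expanding \eqref{eq:QD-exp}) is also available but longer.
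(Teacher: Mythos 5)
Your proof is correct and follows essentially the same route as the paper: the paper likewise observes that $\bigl(x^i,Q_{\bs}(x)\bigr)_{(k)}$ equals the quasideterminant obtained by replacing the bottom row of \eqref{eq:pol-Q} with $(\nu^{(k)}_{i,0},\dots,\nu^{(k)}_{i,|\bs|})$, which for $0\le i\le s_k-1$ duplicates a row already present in $\mathcal{M}_{\bs}$ and therefore vanishes. The only difference is cosmetic: you verify the repeated-row vanishing from first principles via the identity $\mathcal{N}\mathcal{N}^{-1}=I$ built into the definition of the quasideterminant, whereas the paper simply invokes it as a standard property.
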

\begin{proof}
	By \eqref{eq:s-forms}, the left hand side of equation \eqref{eq:mult-orthog-Q} equals to the quasideterminant (with respect to the rightmost element o the last row) of the matrix $\mathcal{M}_{\bs}$ supplemented by the row
	$ (	\nu_{i,0}^{(k)} , \nu_{i,1}^{(k)} ,  \dots , \nu_{i,|\bs|}^{(k)}) $. In the considered range of $i$ the quasideterminant has two identical rows, and therefore vanishes.
\end{proof}

\begin{Cor}
Supplement the matrix $\mathcal{M}_{\bs}$ by the row $ (\nu_{s_k,0}^{(k)} , \nu_{s_k,1}^{(k)} ,  \dots , \nu_{s_k,|\bs|}^{(k)}) $, and define the corresponding quasideterminant
\begin{equation} \label{eq:rho}
	\rho^{(k)}_{\bs} = \left| \begin{matrix}
		\nu_{0,0}^{(1)} & \nu_{0,1}^{(1)} &  \dots & \nu_{0,|\bs|}^{(1)}\\
		\vdots & \vdots & \ddots   & \vdots \\		\nu_{s_1-1,0}^{(1)} & \nu_{s_1 -1,1}^{(1)} &  \dots & \nu_{s_1 - 1,|\bs|}^{(1)} \\
		\cdots & \cdots &   & \cdots \\
		\nu_{0,0}^{(r)} & \nu_{0,1}^{(r)} &  \dots & \nu_{0,|\bs|}^{(r)}\\
		\vdots & \vdots & \ddots   & \vdots \\		
		\nu_{s_r-1,0}^{(r)} & \nu_{s_r -1,1}^{(r)} &  \dots & \nu_{s_r - 1,|\bs|}^{(r)} \\
		\nu_{s_k,0}^{(k)} & \nu_{s_k ,1}^{(k)} &  \dots & \boxed{\nu_{s_k,|\bs|}^{(k)} }
	\end{matrix} \right| \; , \quad k=1,\dots , r,
\end{equation}
then 
\begin{equation} \label{eq:mult-norm-Q}
	\bigl( x^{s_k} , Q_{\bs}(x) \bigr)_{(k)} = \rho^{(k)}_{\bs} , \qquad k=1,\dots , r .
\end{equation}
\end{Cor}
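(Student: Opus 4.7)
The plan is to reproduce the reasoning used for the preceding Proposition, with a different choice of auxiliary row. The essential observation is that the quasideterminant representation \eqref{eq:pol-Q} of $Q_{\bs}(x)$ is \emph{linear} in the entries of its bottom row. Indeed, applying the recursive expansion formula \eqref{eq:QD-exp} to the boxed entry $x^{|\bs|}$ yields
\begin{equation*}
Q_{\bs}(x) = \sum_{j=0}^{|\bs|} x^j \, q_{\bs, j},
\end{equation*}
where $q_{\bs, |\bs|} = 1$ and the remaining coefficients $q_{\bs, j}$ depend only on the blocks making up $\mathcal{M}_{\bs}$, not on the bottom row itself. Crucially, the coefficients stand to the right of the $x^j$, matching the convention of the bilinear form in \eqref{eq:s-forms}.

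The key step is then to observe that if one replaces the bottom row $(1, x, \dots, x^{|\bs|})$ in \eqref{eq:pol-Q} by any numerical row $(r_0, r_1, \dots, r_{|\bs|})$ and takes the quasideterminant with respect to the rightmost (boxed) element, the very same expansion \eqref{eq:QD-exp} produces $\sum_j r_j \, q_{\bs, j}$. I would apply this substitution with the row $(\nu^{(k)}_{s_k, 0}, \dots, \nu^{(k)}_{s_k, |\bs|})$: on one hand the resulting object is, by definition, the quasideterminant $\rho^{(k)}_{\bs}$ in \eqref{eq:rho}; on the other, formula \eqref{eq:s-forms} gives
\begin{equation*}
\sum_j \nu^{(k)}_{s_k, j}\, q_{\bs, j} \;=\; \bigl(x^{s_k}, Q_{\bs}(x)\bigr)_{(k)}.
\end{equation*}
Comparing the two expressions yields \eqref{eq:mult-norm-Q}.

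The argument is in fact entirely parallel to the proof of the preceding Proposition: there the auxiliary row $(\nu^{(k)}_{i,0}, \dots, \nu^{(k)}_{i,|\bs|})$ for $i < s_k$ coincides with a row already present in $\mathcal{M}_{\bs}$, producing a quasideterminant with two identical rows that consequently vanishes; here the value $i = s_k$ corresponds to a genuinely new row of moment data, so the resulting quasideterminant is the nontrivial object $\rho^{(k)}_{\bs}$. I do not foresee any real obstacle — the only item requiring care is the linearity of the quasideterminant in the boxed row, which is a direct consequence of the expansion \eqref{eq:QD-exp} and of the fact that in that formula the entries $x_{i, j'}$ of the boxed row appear on the left, exactly as the $a_i$ do in the bilinear pairing \eqref{eq:s-forms}.
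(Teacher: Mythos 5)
Your proof is correct and follows essentially the same route as the paper: the paper's proof of the preceding Proposition already rests on the observation that, by \eqref{eq:s-forms} and the expansion \eqref{eq:QD-exp}, the pairing $\bigl(x^i, Q_{\bs}(x)\bigr)_{(k)}$ equals the quasideterminant of $\mathcal{M}_{\bs}$ supplemented by the row $(\nu^{(k)}_{i,0},\dots,\nu^{(k)}_{i,|\bs|})$, and the Corollary is just the case $i=s_k$, where that quasideterminant is by definition $\rho^{(k)}_{\bs}$. Your explicit verification of the linearity in the bottom row and of the left/right placement of the coefficients is exactly the content the paper leaves implicit.
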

\begin{Rem}
	The matrix used to define $\rho^{(k)}_{\bs}$ can be described also (up to a permutation of rows, which does not change the quasideterminants) as $\mathcal{M}_{\bs+\be_k}$ with the last column removed.
\end{Rem}

\begin{Prop} \label{prop:NC-rho-Q}
	When $r\geq 3$ then the quasideterminants $\rho^{(k)}_{\bs}$ satify the potential form  \eqref{eq:NC-H-rho} of the non-commutative Hirota system.
Moreover, the polynomials $Q_{\bs}(x)$ satisfy the corresponding linear system
\begin{equation} \label{eq:lin-NC-H-rho}
	Q_{\bs + \be_i} (x) - Q_{\bs + \be_j} (x) = Q_{\bs} (x) [\rho^{(i)}_{\bs}]^{-1}	\rho^{(i)}_{\bs+\be_j}, \qquad i\neq j \, .
\end{equation}	
\end{Prop}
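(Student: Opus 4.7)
The plan is to prove the linear system \eqref{eq:lin-NC-H-rho} first by an orthogonality and uniqueness argument, and then to derive the potential Hirota relations \eqref{eq:NC-H-rho} by direct application of Lemma~\ref{lem:quasi-NC-H}.

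For the linear system, observe that $Q_{\bs + \be_i}(x) - Q_{\bs + \be_j}(x)$ is a polynomial in $x$ of degree at most $|\bs|$, since both summands are monic of degree $|\bs|+1$. The orthogonality relations \eqref{eq:mult-orthog-Q} applied to each summand give $(x^l, Q_{\bs + \be_i} - Q_{\bs + \be_j})_{(k)} = 0$ for all $l = 0, \dots, s_k - 1$ and $k = 1, \dots, r$, so the difference satisfies the same $|\bs|$ orthogonality conditions that characterize $Q_{\bs}(x)$. Since \eqref{eq:pol-Q} exhibits a monic solution and the right-solution-space of polynomials of degree $\leq |\bs|$ obeying these conditions is one-dimensional in the generic formal setting over $\mathcal{R}$, we conclude $Q_{\bs + \be_i}(x) - Q_{\bs + \be_j}(x) = Q_{\bs}(x)\, c$ for some $c \in \mathcal{R}$. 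Applying the bilinear form $(x^{s_i}, \cdot)_{(i)}$ to both sides, the term $(x^{s_i}, Q_{\bs + \be_i})_{(i)}$ vanishes because $(\bs + \be_i)_i = s_i + 1$ extends the orthogonality to $l = s_i$, while \eqref{eq:mult-norm-Q} gives $(x^{s_i}, Q_{\bs + \be_j})_{(i)} = \rho^{(i)}_{\bs + \be_j}$ and $(x^{s_i}, Q_{\bs})_{(i)} = \rho^{(i)}_{\bs}$, identifying $c = [\rho^{(i)}_{\bs}]^{-1}\rho^{(i)}_{\bs + \be_j}$ up to the sign built into the convention.

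For the Hirota potential system, I would apply Lemma~\ref{lem:quasi-NC-H} with the following identifications: take $D$ to be the $|\bs| \times |\bs|$ block obtained from $\mathcal{M}_{\bs}$ by deleting its last column, let $E_1$ be the deleted column and $E_2$ the new column indexed by $|\bs|+1$ that appears in $\mathcal{M}_{\bs + \be_l}$, and choose the triples $(A, a_1, a_2)$, $(B, b_1, b_2)$, $(C, c_1, c_2)$ to be the rows $(\nu^{(i)}_{s_i, m})_{m=0}^{|\bs|+1}$, $(\nu^{(j)}_{s_j, m})_{m=0}^{|\bs|+1}$, $(\nu^{(k)}_{s_k, m})_{m=0}^{|\bs|+1}$ split at positions $|\bs|$ and $|\bs|+1$. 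Bringing each quasideterminant of Lemma~\ref{lem:quasi-NC-H} into the shape of \eqref{eq:rho} by row-permutation invariance, the three summands become exactly $[\rho^{(i)}_{\bs}]^{-1}\rho^{(i)}_{\bs + \be_j}$, $[\rho^{(j)}_{\bs}]^{-1}\rho^{(j)}_{\bs + \be_k}$, $[\rho^{(k)}_{\bs}]^{-1}\rho^{(k)}_{\bs + \be_i}$, and the Lemma's identity is precisely the three-term part of \eqref{eq:NC-H-rho}. The antisymmetry $[\rho^{(i)}_{\bs}]^{-1}\rho^{(i)}_{\bs + \be_j} + [\rho^{(j)}_{\bs}]^{-1}\rho^{(j)}_{\bs + \be_i} = 0$ follows by specializing $C = B$ (equivalently, $k = j$) in Lemma~\ref{lem:quasi-NC-H}, where the middle summand vanishes through row coincidence; alternatively, it emerges by swapping $i \leftrightarrow j$ in the already-established linear system.

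The main obstacle will be the combinatorial bookkeeping of matching the six quasideterminants $\rho^{(i)}_{\bs}, \rho^{(j)}_{\bs}, \rho^{(k)}_{\bs}, \rho^{(i)}_{\bs + \be_j}, \rho^{(j)}_{\bs + \be_k}, \rho^{(k)}_{\bs + \be_i}$ against the pattern of Lemma~\ref{lem:quasi-NC-H}: one must argue carefully, via row-permutation invariance, that the boxed entries end up in the positions required by \eqref{eq:rho}, and correctly identify which of $E_1, E_2$ corresponds to the increment by $\be_j$ versus $\be_k$. On the orthogonality side, the delicate point is the one-dimensionality (up to right scaling) of the solution space of polynomials obeying the multiple orthogonality, which in the non-commutative formal setting rests on the free structure of $\mathcal{R}$ and is guaranteed by the explicit quasideterminantal expression \eqref{eq:pol-Q} for $Q_{\bs}$.
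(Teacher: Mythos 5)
Your proposal is essentially correct, and for the core three-term part of \eqref{eq:NC-H-rho} it coincides with the paper's proof: the paper likewise invokes Lemma~\ref{lem:quasi-NC-H} with exactly the row identification you describe ($D$ the matrix $\mathcal{M}_{\bs}$ without its last column, $E_1,E_2$ the columns indexed $|\bs|$ and $|\bs|+1$, and $A,B,C$ the rows $\nu^{(i)}_{s_i,\cdot}$, $\nu^{(j)}_{s_j,\cdot}$, $\nu^{(k)}_{s_k,\cdot}$). Where you genuinely diverge is in the remaining two claims. The paper stays entirely inside quasideterminant calculus: the antisymmetry relation is read off from the column homological relations \eqref{eq:chr} applied to $\mathcal{M}_{\bs+\be_i+\be_j}$ with its last column removed (the two boxed positions being the last entries of the rows $\nu^{(i)}_{s_i,\cdot}$ and $\nu^{(j)}_{s_j,\cdot}$, the deleted-row minors being $\rho^{(i)}_{\bs}$ and $\rho^{(j)}_{\bs}$), and the linear system is obtained by running Lemma~\ref{lem:quasi-NC-H} once more with the polynomial row involved, combined with the same homological relations. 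You instead prove the linear system by a degree count, the shared orthogonality conditions, and the normalizations \eqref{eq:mult-norm-Q}. That argument is sound and is in fact the style the author himself adopts later (compare the alternative proof of \eqref{eq:NC-Q-A-j-t} and the proofs in Section 4), but it silently imports a normality hypothesis --- that the right solution space of the $|\bs|$ orthogonality conditions in degree at most $|\bs|$ is the rank-one free right module generated by $Q_{\bs}$ --- which Section 3 never states; the quasideterminantal route needs only that the relevant quasideterminants exist in the free field. Your two fallbacks for antisymmetry both work: specializing $C=B$ in the Lemma kills the middle term because a quasideterminant with two equal rows vanishes, and deducing antisymmetry from the already-established linear system by swapping $i\leftrightarrow j$ is even cleaner.

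One point you should not leave at ``up to the sign built into the convention'': your own normalization computation determines the sign. Pairing \eqref{eq:lin-NC-H-rho} with $x^{s_i}$ in the $i$-th form gives $-\rho^{(i)}_{\bs+\be_j}=\rho^{(i)}_{\bs}\,c$, while pairing with $x^{s_j}$ in the $j$-th form gives $\rho^{(j)}_{\bs+\be_i}=\rho^{(j)}_{\bs}\,c$; hence $c=-[\rho^{(i)}_{\bs}]^{-1}\rho^{(i)}_{\bs+\be_j}=[\rho^{(j)}_{\bs}]^{-1}\rho^{(j)}_{\bs+\be_i}$, the two expressions agreeing precisely because of the antisymmetry relation. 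So the coefficient your argument produces is the negative of the one displayed in \eqref{eq:lin-NC-H-rho} as printed (equivalently, the printed formula holds with $i$ and $j$ interchanged on the right-hand side); this is a definite sign that you can and should pin down explicitly rather than absorb into ``convention.''
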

\begin{proof}
The first part of equations \eqref{eq:NC-H-rho} are just the appropriate column homological relations applied to the matrix 
$\mathcal{M}_{\bs + \be_i + \be_j}$ with the last column removed. The second part of \eqref{eq:NC-H-rho} follows directly from Lemma~\ref{lem:quasi-NC-H} with appropriate identification and permutation of rows. The same argument together with column homological relations leads to \eqref{eq:lin-NC-H-rho}.
\end{proof}

Analogous results can be given for the $(|\bs| + 1) \times |\bs|$ column-block matrix 
\begin{equation*}
	\mathcal{N}_{\bs} = \left( \begin{matrix}
		\nu_{0,0}^{(1)} & \cdots & \nu_{0,s_{1}-1 }^{(1)} & \vdots & \nu_{0,0}^{(r)} &\dots & \nu_{0,s_r-1}^{(r)}\\
		\nu_{1,0}^{(1)} & \cdots & \nu_{1,s_{1}-1 }^{(1)} & \vdots & \nu_{1,0}^{(r)} &\dots & \nu_{1,s_r-1}^{(r)} \\
		\vdots & \ddots & \vdots &  & \vdots & \ddots & \vdots  \\
		\nu_{|\bs|,0}^{(1)} & \cdots & \nu_{|\bs|,s_{1}-1 }^{(1)} & \vdots & \nu_{|\bs|,0}^{(r)} &\dots & \nu_{|\bs|,s_r-1}^{(r)} 		
	\end{matrix} \right) \; ,
\end{equation*} 
the polynomials 
\begin{equation} \label{eq:pol-P}
	P_{\bs}(x) = \left| \begin{matrix}
		\nu_{0,0}^{(1)} & \cdots & \nu_{0,s_{1}-1 }^{(1)} & \vdots & \nu_{0,0}^{(r)} &\dots & \nu_{0,s_r-1}^{(r)} & 1 \\
	\nu_{1,0}^{(1)} & \cdots & \nu_{1,s_{1}-1 }^{(1)} & \vdots & \nu_{1,0}^{(r)} &\dots & \nu_{1,s_r-1}^{(r)} & x& \\
	\vdots & \ddots & \vdots &  & \vdots & \ddots & \vdots & \vdots \\
	\nu_{|\bs|,0}^{(1)} & \cdots & \nu_{|\bs|,s_{1}-1 }^{(1)} & \vdots & \nu_{|\bs|,0}^{(r)} &\dots & \nu_{|\bs|,s_r-1}^{(r)} & \boxed{x^{|\bs|}}
	\end{matrix} \right| \, ,
\end{equation}
and the corresponding potentials 
\begin{equation} \label{eq:pi}
	\pi_{\bs}^{(k)} = \left| \begin{matrix}
		\nu_{0,0}^{(1)} & \cdots & \nu_{0,s_{1}-1 }^{(1)} & \vdots & \nu_{0,0}^{(r)} &\dots & \nu_{0,s_r-1}^{(r)} & \nu_{0,s_k}^{(k)} \\
		\nu_{1,0}^{(1)} & \cdots & \nu_{1,s_{1}-1 }^{(1)} & \vdots & \nu_{1,0}^{(r)} &\dots & \nu_{1,s_r-1}^{(r)} & \nu_{1,s_k}^{(k)} & \\
		\vdots & \ddots & \vdots &  & \vdots & \ddots & \vdots & \vdots \\
		\nu_{|\bs|,0}^{(1)} & \cdots & \nu_{|\bs|,s_{1}-1 }^{(1)} & \vdots & \nu_{|\bs|,0}^{(r)} &\dots & \nu_{|\bs|,s_r-1}^{(r)} & \boxed{\nu_{|\bs|,s_k}^{(k)}}
	\end{matrix} \right| \, .
\end{equation}
By replacing in the above proofs the row/column homological relations by the column/row relations, and Lemma~\ref{lem:quasi-NC-H} by Corollary~\ref{cor:quasi-NC-H} we obtain as follows:
\begin{itemize}
	\item the multiple orthogonality relations
	\begin{equation} \label{eq:mult-orthog-Q}
	\bigl( P_{\bs}(x) , x^i  \bigr)_{(k)} = 0, \qquad i = 0, \dots , s_k -1,
\end{equation}	
	\item the normalizations
	\begin{equation} \label{eq:mult-norm-P}
	\bigl( P_{\bs}(x) , x^{s_k}  \bigr)_{(k)} = \pi^{(k)}_{\bs} , \qquad k=1,\dots , r ,
	\end{equation}
\item the potential form of the transposed non-commutative Hirota system
\begin{equation} 
	\label{eq:NC-H-pi}
	\pi^{(i)}_{\bs+\be_j} 	[\pi^{(i)}_{\bs}]^{-1} + 	\pi^{(j)}_{\bs+\be_i} [\pi^{(j)}_{\bs}]^{-1} = 0, \qquad 
		\pi^{(i)}_{\bs+\be_j} [\pi^{(i)}_{\bs}]^{-1} + 
		\pi^{(j)}_{\bs+\be_k} [\pi^{(j)}_{\bs}]^{-1} + 
		\pi^{(k)}_{\bs+\be_i} 	[\pi^{(k)}_{\bs}]^{-1} = 0,
\end{equation}
with distinct $i,j,k$, 
\item the corresponding transposed linear system
\begin{equation} \label{eq:lin-NC-H-pi}
	P_{\bs + \be_i} (x) - P_{\bs + \be_j} (x) = \pi^{(i)}_{\bs+\be_j} [\pi^{(i)}_{\bs}]^{-1} P_{\bs} (x) 	, \qquad i\neq j \, .
\end{equation}	
\end{itemize}

\begin{Rem}
	The following two natural specifications/reductions do not appear to have been studied yet:
	\begin{itemize}
		\item When we allow the bi-moments to commute between themselves one obtains multiple bi-orthogonal polynomials together with the corresponding solutions of the Hirota equations~\eqref{eq:Hirota-com}. 
		\item In the case $r=1$ we end up with non-commutative bi-orthogonal polynomials.
	\end{itemize}
\end{Rem}

\section{Non-commutative multiple orthogonal polynomials}
\label{sec:NC-multiple-OP}
In the present Section we transfer to the non-commutative setting the structural results of the theory of multiple orthogonal polynomials and their relation to integrable systems theory. In doing that we are guided by the recent work \cite{Doliwa-NHP} on the non-commutative Hermite--Pad\'{e} approximation theory.

\subsection{Definition of the non-commutative multiple orthogonal polynomials}
Following the algebraic theory of non-commutative orthogonal polynomials given by \cite{NCSF} let us define the multiple non-commutative  orthogonal polynomials.
Given $r$ sequences $(\nu^{(k)}_j)_{j\in \NN_0}$, $k=1,\dots , r$, of moments, consider the free division ring $\mathcal{R}$ generated by the moments together with a natural involutive \emph{anti-automorphism} $(\;)^*$ such that $(\nu^{(k)}_j)^* = \nu^{(k)}_j$. The anti-automorphism can be extended to the ring of polynomials  $\mathcal{R}[x]$ in the commutative indeterminate $x$ with coefficients in  $\mathcal{R}$ by putting $\bigl( \sum_i a_i x^i \bigr)^* = \sum_i a_i^* x^i $. Define $r$ sesquilinear forms $\left< . \, , . \right>e_{(k)}$, $k=1,\dots , r$, by the relations
\begin{equation}
	\Bigl \langle \sum_i a_i x^i , \sum_j b_j x^j \Bigr \rangle_{(k)} = \sum_{i,j} a_i^* \nu_{i+j}^{(k)} b_j .
\end{equation}

Given $r$ non-negative integers $\bs = (s_1,\dots , s_r) \in\mathbb{N}_0^r$. The \emph{multiple orthogonal polynomial of the index $\bs$} is the monic polynomial $Q_{\bs}(x)$  of degree $|\bs|$ 
which satisfies the following orthogonality conditions 
\begin{equation} \label{eq:NC-orth-r}
	\bigl \langle x^j , Q_{\bs} (x) \bigr \rangle_{(k)} = 0, \qquad j=0,1,\dots s_k -1,
\end{equation}
whenever such a polynomial exists and is unique. The equations \eqref{eq:NC-orth-r} give a system
of $|\bs|$ linear equations for the $|\bs|$ non-leading coefficients of the polynomial $Q_{\bs}(x)$;
the multi-index $\bs$ is said to be \emph{normal} whenever the solution exists and is unique. A system of moments is said to be perfect in case all $\bs \in \NN^r$ are normal; we restrict attention to perfect systems only.
One can observe that above-mentioned linear equations can be solved in terms of quasideterminants, what gives
	\begin{equation} \label{eq:Q-r}
		Q_{\bs}(x) = \left| \begin{matrix}
			\nu_{0}^{(1)} & \nu_{1}^{(1)}  &  \dots & \nu_{|\bs|}^{(1)} \\
			\vdots & \vdots & \ddots   & \vdots \\
			\nu_{s_1 - 1}^{(1)}  & \nu_{s_1}^{(1)}  &  \dots & \nu_{ |\bs|+ s_1 - 1}^{(1)}  \\
			\cdots & \cdots &   & \cdots \\
			\nu_{0}^{(r)} & \nu_{1}^{(r)} &  \dots & \nu_{|\bs|}^{(r)}\\
			\vdots & \vdots & \ddots   & \vdots \\
			\nu_{s_r - 1}^{(r)} & \nu_{s_r}^{(r)} &  \dots & \nu_{ |\bs| + s_r - 1}^{(r)} \\
			1 & x & \dots & \boxed{x^{|\bs|}}
		\end{matrix} 	\right|.
	\end{equation}	
Moreover we have
	\begin{equation} \label{eq:NC-orth-rho}
		\bigl \langle x^{s_k} , Q_{\bs} (x) \bigr \rangle_{(k)} = \rho^{(k)}_{\bs},
	\end{equation}
	where 
	\begin{equation} \label{eq:rho-pol}
		\rho^{(k)}_{\bs} = \left| \begin{matrix}
			\nu_{0}^{(1)} & \nu_{1}^{(1)}  &  \dots & \nu_{|\bs|}^{(1)} \\
			\vdots & \vdots & \ddots   & \vdots \\
			\nu_{s_1 - 1}^{(1)}  & \nu_{s_1}^{(1)}  &  \dots & \nu_{ |\bs|+ s_1 - 1}^{(1)}   \\
			\cdots & \cdots &   & \cdots \\
			\nu_{0}^{(r)} & \nu_{1}^{(r)} &  \dots & \nu_{|\bs|}^{(r)}\\
			\vdots & \vdots & \ddots   & \vdots \\
			\nu_{s_r - 1}^{(r)} & \nu_{s_r}^{(r)} &  \dots & \nu_{ |\bs| + s_r - 1}^{(r)} \\
			\nu_{s_k}^{(k)}  & \nu_{s_k+1}^{(k)}  &  \dots & \boxed{\nu_{ |\bs|+ s_k}^{(k)} }
		\end{matrix} 	\right|, \quad k=1,\dots,r .
	\end{equation}	

The above results are direct specifications of the theory of (non-commutative multiple) bi-orthogonal polynomials, given in Section~\ref{sec:NC-multiple-BOP}, to the simpler case when 
\begin{equation} \label{eq:NC-red}
	\nu^{(k)}_{i,j} = \nu^{(k)}_{i+j}	, \quad r=1,\dots ,r, \quad i,j \geq 0,		
\end{equation}
and, correspondingly,
\begin{equation}
	P_{\bs}(x) = Q_{\bs}(x)^*, \qquad \pi^{(k)}_{\bs} = \rho_{\bs}^{(k)*}
\end{equation}
In particular, by Proposition~\ref{prop:NC-rho-Q}, when $r\geq 3$ then the quasideterminants $\rho^{(k)}_{\bs}$ satisfy the potential form  \eqref{eq:NC-H-rho} of the non-commutative Hirota system, and the polynomials $Q_{\bs}(x)$ satisfy the corresponding linear system \eqref{eq:lin-NC-H-rho}. In the next Section we will present additional relations specific to the reduction~\eqref{eq:NC-red}.

\subsection{Further properties of the non-commutative multiple orthogonal polynomials}
Below we present the non-commutative generalization of the standard three-term relation for orthogonal polynomials. We also provide the constraints which relate coefficients of the generalized relations. The commutative version of the relations and of the constraints for multiple orthogonal polynomials was given in \cite{VanAsche-nn-mop}.

	\begin{Prop}
	The non-commutative multiple orthogonal polynomials satisfy the following generalization of the three term relation
	\begin{equation} \label{eq:gen-3-term}
		x Q_{\bs} (x) = Q_{\bs + \be_j}(x) +  Q_{\bs}(x) b_{\bs}^{(j)} + \sum_{k=1}^r  Q_{\bs - \be_k}(x) a_{\bs}^{(k)}, \qquad j=1,2,\dots , r,
	\end{equation}
	where
	\begin{equation} \label{eq:NC-orth-a}
		a_{\bs}^{(k)} = [\rho^{(k)}_{\bs - \be_k}]^{-1} \rho^{(k)}_{\bs}.
	\end{equation}
\end{Prop}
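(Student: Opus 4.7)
The plan is to derive the coefficients $a_{\bs}^{(k)}$ first by an orthogonality calculation, and then obtain $b_{\bs}^{(j)}$ from a uniqueness argument stemming from the perfect-system hypothesis. To start, I would introduce the auxiliary polynomial
\[
T(x) := xQ_{\bs}(x) - \sum_{k=1}^{r} Q_{\bs-\be_k}(x)\, [\rho^{(k)}_{\bs-\be_k}]^{-1}\rho^{(k)}_{\bs},
\]
with the convention that summands for which $s_k=0$ are omitted. Since each $Q_{\bs-\be_k}$ has degree $|\bs|-1$ and $xQ_{\bs}(x)$ is monic of degree $|\bs|+1$, so is $T(x)$.

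The core calculation is to check that $\langle x^i, T(x)\rangle_{(\ell)} = 0$ for every $0\le i\le s_\ell-1$ and every $\ell$. I would use the shift identity $\langle x^i, xp(x)\rangle_{(\ell)} = \langle x^{i+1}, p(x)\rangle_{(\ell)}$, which is immediate because $(x^i)^{*}=x^i$, together with the orthogonality relations \eqref{eq:NC-orth-r} for $Q_{\bs}$ and for each $Q_{\bs-\be_k}$. For $i\le s_\ell-2$ every pairing vanishes: $\langle x^{i+1}, Q_{\bs}\rangle_{(\ell)}=0$ since $i+1\le s_\ell-1$, and each $\langle x^i, Q_{\bs-\be_k}\rangle_{(\ell)}=0$ because $i$ lies in the orthogonal range of $Q_{\bs-\be_k}$ for both $k\neq\ell$ (threshold $s_\ell-1$) and $k=\ell$ (threshold $s_\ell-2$). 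The single non-trivial index $i=s_\ell-1$ leaves just two survivors that cancel exactly: $\langle x^{s_\ell}, Q_{\bs}\rangle_{(\ell)} = \rho^{(\ell)}_{\bs}$ by \eqref{eq:NC-orth-rho}, and $\rho^{(\ell)}_{\bs-\be_\ell}\,[\rho^{(\ell)}_{\bs-\be_\ell}]^{-1}\rho^{(\ell)}_{\bs} = \rho^{(\ell)}_{\bs}$ from the $k=\ell$ summand. This cancellation both justifies the formula \eqref{eq:NC-orth-a} for $a_{\bs}^{(k)}$ and shows that $T(x)$ satisfies the full set of orthogonality relations of index $\bs$.

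Setting $R(x):=T(x) - Q_{\bs+\be_j}(x)$ kills the leading term, so $R(x)$ has degree at most $|\bs|$; and since $Q_{\bs+\be_j}$ satisfies a stronger set of orthogonality relations than those of index $\bs$, $R(x)$ lies in the space $V_{\bs}$ of polynomials of degree $\le|\bs|$ satisfying the index-$\bs$ orthogonality. Letting $b_{\bs}^{(j)}\in\mathcal{R}$ denote the coefficient of $x^{|\bs|}$ in $R(x)$, the polynomial $R(x) - Q_{\bs}(x)\,b_{\bs}^{(j)}$ has degree strictly less than $|\bs|$ and still lies in $V_{\bs}$. Such a polynomial must vanish: otherwise, adding it to $Q_{\bs}(x)$ would produce a second monic degree-$|\bs|$ element of $V_{\bs}$, contradicting normality of the index $\bs$. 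This yields $R(x)=Q_{\bs}(x)\,b_{\bs}^{(j)}$, which rearranges to \eqref{eq:gen-3-term}.

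The step that requires care is the non-commutative bookkeeping: the coefficients $a_{\bs}^{(k)}$ and $b_{\bs}^{(j)}$ must be placed to the right of the polynomials, so that the orthogonality relations, which are right-$\mathcal{R}$-linear in their second argument, remain compatible with the expansion, and so that the uniqueness conclusion drawn from normality serves as the appropriate substitute for the classical one-dimensionality argument. A minor issue is the boundary case $s_k=0$, where $Q_{\bs-\be_k}$ and $\rho^{(k)}_{\bs-\be_k}$ are undefined; the corresponding summand is simply absent, consistent with the convention stated at the outset.
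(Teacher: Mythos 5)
Your proof is correct, but it inverts the logic of the paper's argument in a way worth noting. The paper first forms $T^{(j)}_{\bs}(x)=xQ_{\bs}(x)-Q_{\bs+\be_j}(x)-Q_{\bs}(x)b^{(j)}_{\bs}$ with $b^{(j)}_{\bs}$ chosen to kill the top coefficient, observes that it satisfies the \emph{relaxed} orthogonality conditions $\langle x^i,T^{(j)}_{\bs}\rangle_{(k)}=0$ for $i\le s_k-2$, invokes normality to assert that the space of degree-$\le|\bs|-1$ solutions of these conditions is $r$-dimensional with right basis $\{Q_{\bs-\be_i}\}_{i=1,\dots,r}$, and only then computes the expansion coefficients $a^{(k,j)}_{\bs}$ by pairing with $x^{s_k-1}$ (which also shows they do not depend on $j$). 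You instead posit the coefficients $a^{(k)}_{\bs}=[\rho^{(k)}_{\bs-\be_k}]^{-1}\rho^{(k)}_{\bs}$ from the start, verify by the shift identity and the cancellation at $i=s_\ell-1$ that $xQ_{\bs}-\sum_k Q_{\bs-\be_k}a^{(k)}_{\bs}$ satisfies the \emph{full} index-$\bs$ orthogonality, and then need only the uniqueness half of normality (any degree-$\le|\bs|-1$ polynomial satisfying the index-$\bs$ conditions vanishes, else it could be added to $Q_{\bs}$). What your route buys is that it sidesteps the one genuinely delicate step of the paper's proof --- the claim that $\{Q_{\bs-\be_i}\}$ spans the solution space of the relaxed system, which over a division ring requires an argument about right-linear independence and the exact dimension count that the paper compresses into the phrase ``by normality.'' The cost is that the coefficients must be guessed rather than derived; the $j$-independence of the $a^{(k)}_{\bs}$, which the paper obtains as a by-product, is built into your ansatz. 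Your handling of the non-commutative bookkeeping (coefficients on the right, right-linearity of the forms in the second slot) and of the boundary case $s_k=0$ is consistent with the paper's conventions.
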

\begin{proof}
By normality, $x Q_{\bs}(x) $ and $Q_{\bs+\be_j}(x)$ are monic polynomials of degree $|\bs|+1$, so their difference is of degree not greater than $|\bs|$.  Hence there is a uniquely determined
constant $b^{(j)}_{\bs}$, which will be specified later, such that the polynomial
\begin{equation*}
T_{\bs}^{(j)}(x) = x Q_{\bs}(x) -Q_{\bs+\be_j}(x) - Q_{\bs}(x) b^{(j)}_{\bs},
\end{equation*}  is of degree not greater than $|\bs|-1$. Note that $T_{\bs}^{(j)}(x) $ is a solution to the system of equations
\begin{equation*} 
\bigl 	\langle x^i , T_{\bs}^{(j)} (x) \bigr \rangle_{(k)} = 0, \qquad i=0,1,\dots s_k -2, \quad k=1,\dots ,r.
\end{equation*}
By normality, the space of solutions is of dimension $r$ with the basis $\{Q_{\bs - \be_i}(x)\}_{i=1,\dots , r}$, thus
\begin{equation} \label{eq:xQ-b-a}
x Q_{\bs}(x) -Q_{\bs+\be_j}(x) - Q_{\bs}(x) b^{(j)}_{\bs}  = \sum_{i=1}^r Q_{\bs - \be_i}(x) a^{(i,j)}_{\bs}  \, .
\end{equation}
By taking $k$-th scalar product of both sides of \eqref{eq:xQ-b-a} with $x^{s_k-1}$  we get 
\begin{equation*}
	\bigl \langle x^{s_k}, Q_{\bs}(x) \bigr \rangle_{(k)} = \bigl \langle x^{s_k-1}, Q_{\bs-\be_k}(x) \bigr \rangle_{(k)} \; a^{(k,j)}_{\bs},
\end{equation*}
what implies that the coefficients $a^{(k,j)}_{\bs}$ are independent of the second upper index $j$, what due to equations \eqref{eq:NC-orth-rho} gives relations \eqref{eq:NC-orth-a}. 
\end{proof}

\begin{Cor}
	By considering in equation \eqref{eq:gen-3-term} coefficients at $x^{|\bs|}$ we get
\begin{equation}
	\qquad b_{\bs}^{(j)} = [\rho^{(j)}_{\bs}]^{-1} \tilde{\rho}^{(j)}_{\bs} -  [\rho^{(j)}_{\bs-\be_j}]^{-1} \tilde{\rho}^{(j)}_{\bs-\be_j} ,
\end{equation}
where $\tilde{\rho}^{(k)}_{\bs}$ is the quasideterminant of the reduced, by relation \eqref{eq:NC-red},  matrix $\mathcal{M}_{\bs}$ with the $k$-th block supplemented by the row $(	\nu_{s_k+1}^{(k)} , \nu_{s_k+2}^{(k)} , \dots , \nu_{s_k + |\bs|+1}^{(k)})$, with respect to its rightmost element, i.e. 
\begin{equation} \label{eq:rho-pol}
	\tilde{\rho}^{(k)}_{\bs} = \left| \begin{matrix}
		\nu_{0}^{(1)} & \nu_{1}^{(1)}  &  \dots & \nu_{|\bs|}^{(1)} \\
		\vdots & \vdots & \ddots   & \vdots \\
		\nu_{s_1 - 1}^{(1)}  & \nu_{s_1}^{(1)}  &  \dots & \nu_{ |\bs|+ s_1 - 1}^{(1)}   \\
		\cdots & \cdots &   & \cdots \\
		\nu_{0}^{(r)} & \nu_{1}^{(r)} &  \dots & \nu_{|\bs|}^{(r)}\\
		\vdots & \vdots & \ddots   & \vdots \\
		\nu_{s_r - 1}^{(r)} & \nu_{s_r}^{(r)} &  \dots & \nu_{ |\bs| + s_r - 1}^{(r)} \\
		\nu_{s_k+1}^{(k)}  & \nu_{s_k+2}^{(k)}  &  \dots & \boxed{\nu_{ |\bs|+ s_k+1}^{(k)} }
	\end{matrix} 	\right|, \quad k=1,\dots,r .
\end{equation}	
\end{Cor}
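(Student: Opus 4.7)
The plan is to compare coefficients of $x^{|\bs|}$ on both sides of the generalized three-term relation \eqref{eq:gen-3-term}, exactly as hinted in the statement. Write $Q_{\bs}(x) = x^{|\bs|} + \alpha_{\bs} x^{|\bs|-1} + (\text{lower order})$, introducing the subleading coefficient $\alpha_{\bs}$. Since $Q_{\bs}(x)$ is monic of degree $|\bs|$, $Q_{\bs+\be_j}(x)$ is monic of degree $|\bs|+1$, and each $Q_{\bs-\be_k}(x)$ has degree $|\bs|-1$, the only contributions to the coefficient of $x^{|\bs|}$ are $\alpha_{\bs}$ (from $xQ_{\bs}$ on the left-hand side), together with $\alpha_{\bs+\be_j}$ (from $Q_{\bs+\be_j}$) and $b^{(j)}_{\bs}$ (from $Q_{\bs}\cdot b^{(j)}_{\bs}$) on the right-hand side; the $a$-terms vanish on degree grounds. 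This immediately yields $b^{(j)}_{\bs} = \alpha_{\bs} - \alpha_{\bs+\be_j}$.

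It therefore remains to identify $\alpha_{\bs}$ with $-[\rho^{(j)}_{\bs-\be_j}]^{-1}\tilde{\rho}^{(j)}_{\bs-\be_j}$ for every $j$ with $s_j\geq 1$; the formula of the Corollary will then follow by telescoping, and the $j$-independence of that ratio will emerge as a by-product. To establish this identification I would expand the quasideterminant \eqref{eq:Q-r} for $Q_{\bs}(x)$ about the boxed entry $x^{|\bs|}$ using \eqref{eq:QD-exp}, producing for the coefficient of $x^{|\bs|-1}$ a sum of the form $-\sum_{p}(|M|_{p,|\bs|})^{-1}V_{p}$, where $M$ denotes the $|\bs|\times|\bs|$ moment block (the first $|\bs|$ columns of the matrix in \eqref{eq:Q-r}) and $V$ its rightmost column, with entries $\nu^{(k)}_{i+|\bs|}$ under the Hankel reduction \eqref{eq:NC-red}.

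To turn this sum into the announced ratio I would apply the column homological relation \eqref{eq:chr} to the matrix of \eqref{eq:Q-r}, taking as auxiliary row the last row of the $j$-th moment block, namely the row $(\nu^{(j)}_{s_{j}-1+l})_{l}$ that plays the role of the augmenting row in $\rho^{(j)}_{\bs-\be_j}$. The first factor in the resulting ratio is recognized as $\rho^{(j)}_{\bs-\be_j}$: by \eqref{eq:rho-pol} this is exactly the quasideterminant of $M$ with the box at the chosen row and rightmost column. The remaining factor reproduces $\tilde{\rho}^{(j)}_{\bs-\be_j}$ once the Hankel symmetry \eqref{eq:NC-red} is invoked to convert the shift $\nu^{(j)}_{s_{j}-1+l}\mapsto\nu^{(j)}_{s_{j}+l}$ of the augmenting row into the presence of the column $V$ (the ``one-step shift'' column) in the coefficient extraction.

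The main obstacle will be the quasideterminant bookkeeping in this last step: the migration of the boxed entry under \eqref{eq:chr} must be tracked carefully, and the Hankel reduction \eqref{eq:NC-red} must be used to translate the shift in the moment index of the auxiliary row into the replacement of the column $V$ inside the quantity $(M^{-1}V)_{|\bs|}$. The degree-counting in the first paragraph is routine; once the identification of $\alpha_{\bs}$ is in hand, the concluding rewrite $b^{(j)}_{\bs}=\alpha_{\bs}-\alpha_{\bs+\be_j}=[\rho^{(j)}_{\bs}]^{-1}\tilde{\rho}^{(j)}_{\bs}-[\rho^{(j)}_{\bs-\be_j}]^{-1}\tilde{\rho}^{(j)}_{\bs-\be_j}$ is immediate.
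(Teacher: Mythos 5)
Your first step is sound and is exactly what the statement's phrase ``considering coefficients at $x^{|\bs|}$'' calls for: writing $Q_{\bs}(x)=x^{|\bs|}+\alpha_{\bs}x^{|\bs|-1}+\dots$, equation \eqref{eq:gen-3-term} gives $b^{(j)}_{\bs}=\alpha_{\bs}-\alpha_{\bs+\be_j}$. The gap is your second step: the identification $\alpha_{\bs}=-[\rho^{(j)}_{\bs-\be_j}]^{-1}\tilde{\rho}^{(j)}_{\bs-\be_j}$, and the accompanying claim that this ratio is independent of $j$, are false once $r\geq 2$. They do hold for $r=1$, because there the matrix obtained by shifting the moment index of the augmenting row by one and the matrix obtained by replacing the last column by the next column are the same Hankel matrix up to transposition; for $r\geq 2$ the row shift touches only the $j$-th block while the column replacement touches all blocks, and the two quasideterminants differ. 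Concretely, take $r=2$ with $\nu^{(1)}=(1,0,1,0,3,\dots)$ and $\nu^{(2)}=(1,1,2,5,\dots)$. Then $Q_{(1,1)}(x)=x^{2}-x-1$, so $\alpha_{(1,1)}=-1$, while
\begin{equation*}
\rho^{(1)}_{(0,1)}=\left|\begin{matrix}1&1\\1&\boxed{0}\end{matrix}\right|=-1,\qquad
\tilde{\rho}^{(1)}_{(0,1)}=\left|\begin{matrix}1&1\\0&\boxed{1}\end{matrix}\right|=1,
\end{equation*}
so $-[\rho^{(1)}_{(0,1)}]^{-1}\tilde{\rho}^{(1)}_{(0,1)}=+1\neq\alpha_{(1,1)}$; moreover the analogous ratio for $j=2$ equals $2$, so the asserted $j$-independence fails as well.

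The defect is not a removable constant that telescopes away. In the same example $Q_{(2,1)}(x)=x^{3}-2x^{2}-3x+2$, so your (correct) formula gives $b^{(1)}_{(1,1)}=\alpha_{(1,1)}-\alpha_{(2,1)}=-1-(-2)=1$, and indeed one checks directly that $xQ_{(1,1)}=Q_{(2,1)}+Q_{(1,1)}\cdot 1+Q_{(0,1)}\cdot 1+Q_{(1,0)}\cdot 2$; but the right-hand side of the Corollary evaluates to $[\rho^{(1)}_{(1,1)}]^{-1}\tilde{\rho}^{(1)}_{(1,1)}-[\rho^{(1)}_{(0,1)}]^{-1}\tilde{\rho}^{(1)}_{(0,1)}=(-1)^{-1}\cdot 2-(-1)^{-1}\cdot 1=-1$. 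So the route you propose cannot be completed, and the computation suggests that for $r\geq 2$ the displayed formula itself is missing cross terms: pairing \eqref{eq:gen-3-term} with $x^{s_j}$ in the $j$-th form yields $\tilde{\rho}^{(j)}_{\bs}=\rho^{(j)}_{\bs}b^{(j)}_{\bs}+\tilde{\rho}^{(j)}_{\bs-\be_j}a^{(j)}_{\bs}+\sum_{k\neq j}\rho^{(j)}_{\bs-\be_k}a^{(k)}_{\bs}$, whose extra summands vanish only when $r=1$. You should either restrict your argument to $r=1$, where both your identification and the stated formula are correct, or rework the coefficient extraction to produce (and prove) the corrected identity including the terms $-[\rho^{(j)}_{\bs}]^{-1}\sum_{k\neq j}\rho^{(j)}_{\bs-\be_k}a^{(k)}_{\bs}$.
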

The coefficients of the relations \eqref{eq:gen-3-term}, which are constructed using the moments, cannot be given arbitrarily, but should satisfy the compatibility conditions of  \eqref{eq:gen-3-term}. 
\begin{Prop}
	The coefficients $a_{\bs}^{(k)}$ and $b_{\bs}^{(j)}$, $j=1,\dots ,r$, satisfy the following equations	
		\begin{align} \label{eq:NC-ab-b}
		\quad \; b_{\bs+\be_k}^{(j)} - b_{\bs + \be_j}^{(k)} & = b_{\bs}^{(j)} - b_{\bs}^{(k)},\\  \label{eq:NC-ab-bb} 
		b_{\bs}^{(k)} b_{\bs + \be_k}^{(j)} - b_{\bs}^{(j)} b_{\bs + \be_j}^{(k)}  & = \sum_{i=1}^r \left( a_{\bs+\be_k}^{(i)} - a_{\bs+\be_j}^{(i)}\right),\\  \label{eq:NC-ab-ba}
		( b_{\bs -\be_j}^{(j)} - b_{\bs - \be_j}^{(k)}) a_{\bs + \be_k}^{(j)} & = a_{\bs}^{(j)}(b_{\bs}^{(j)} - b_{\bs}^{(k)}).
	\end{align}
\end{Prop}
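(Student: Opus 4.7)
My plan is to read off all three identities as compatibility conditions for the generalised three-term recurrence~\eqref{eq:gen-3-term}. The basic observation is that subtracting~\eqref{eq:gen-3-term} for two directions $j$ and $k$ at the same base point~$\bs$ produces the linear relation
\[
Q_{\bs+\be_k}(x) - Q_{\bs+\be_j}(x) = Q_{\bs}(x)\bigl(b^{(j)}_{\bs}-b^{(k)}_{\bs}\bigr),
\]
which is the specialisation to the present setting of the linear problem~\eqref{eq:lin-NC-H-rho}. Identity~\eqref{eq:NC-ab-b} then falls out of the monicity of the $Q_{\bt}$: comparing the coefficient of $x^{|\bt|}$ in~\eqref{eq:gen-3-term} gives $b^{(j)}_{\bt}=q_{\bt}-q_{\bt+\be_j}$, where $q_{\bt}$ denotes the sub-leading coefficient of~$Q_{\bt}$, and plugging this into $b^{(j)}_{\bs+\be_k}-b^{(k)}_{\bs+\be_j}$ immediately yields~\eqref{eq:NC-ab-b} since $\bs+\be_j+\be_k=\bs+\be_k+\be_j$.

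For~\eqref{eq:NC-ab-bb} and~\eqref{eq:NC-ab-ba} the plan is to multiply the displayed linear relation by $x$ and then expand both sides using~\eqref{eq:gen-3-term}. On the left, applying the recurrence in direction $j$ to $xQ_{\bs+\be_k}$ and in direction $k$ to $xQ_{\bs+\be_j}$ cancels the common top-degree term $Q_{\bs+\be_j+\be_k}$; on the right, expanding $xQ_{\bs}$ in direction~$j$ produces a $Q_{\bs+\be_j}(b^{(j)}_{\bs}-b^{(k)}_{\bs})$ contribution. After substituting the displayed linear relation again to rewrite $Q_{\bs+\be_k}=Q_{\bs+\be_j}+Q_{\bs}(b^{(j)}_{\bs}-b^{(k)}_{\bs})$ and invoking~\eqref{eq:NC-ab-b} to identify $b^{(j)}_{\bs+\be_k}-b^{(k)}_{\bs+\be_j}$ with $b^{(j)}_{\bs}-b^{(k)}_{\bs}$, the $Q_{\bs+\be_j}$-contributions cancel and one is left with the identity
\[
Q_{\bs}\bigl[b^{(j)}_{\bs}(b^{(j)}_{\bs}-b^{(k)}_{\bs}) - (b^{(j)}_{\bs}-b^{(k)}_{\bs})b^{(j)}_{\bs+\be_k}\bigr] + \sum_{i=1}^r Q_{\bs-\be_i}a^{(i)}_{\bs}(b^{(j)}_{\bs}-b^{(k)}_{\bs}) = \sum_{i=1}^r\bigl[Q_{\bs+\be_k-\be_i}a^{(i)}_{\bs+\be_k} - Q_{\bs+\be_j-\be_i}a^{(i)}_{\bs+\be_j}\bigr].
\]

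Equation~\eqref{eq:NC-ab-bb} is then obtained by reading off the coefficient of $x^{|\bs|}$ on both sides of the above identity: the $Q_{\bs-\be_i}$ terms are of degree $|\bs|-1$ and contribute nothing, whereas every other $Q_{\bt}$ appearing is monic of degree $|\bs|$, so the right-hand sum contributes $\sum_{i}[a^{(i)}_{\bs+\be_k}-a^{(i)}_{\bs+\be_j}]$; a further application of~\eqref{eq:NC-ab-b} to rewrite $(b^{(j)}_{\bs}-b^{(k)}_{\bs})b^{(j)}_{\bs+\be_k}$ reduces the quadratic $b$-expression on the left to $b^{(k)}_{\bs}b^{(j)}_{\bs+\be_k}-b^{(j)}_{\bs}b^{(k)}_{\bs+\be_j}$.

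To extract~\eqref{eq:NC-ab-ba} I would pair both sides of the displayed identity with $x^{s_j-1}$ in the $j$-th sesquilinear form. By~\eqref{eq:NC-orth-r} and~\eqref{eq:NC-orth-rho}, every $Q_{\bt}$ whose $j$-th multi-index is at least $s_j$ is annihilated, so the only surviving summands are the $i=j$ terms involving $Q_{\bs-\be_j}$ (contributing $\rho^{(j)}_{\bs-\be_j}$) and $Q_{\bs+\be_k-\be_j}$ (contributing $\rho^{(j)}_{\bs+\be_k-\be_j}$); the identity collapses to $\rho^{(j)}_{\bs-\be_j}a^{(j)}_{\bs}(b^{(j)}_{\bs}-b^{(k)}_{\bs})=\rho^{(j)}_{\bs+\be_k-\be_j}a^{(j)}_{\bs+\be_k}$. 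Pairing the linear relation at base point $\bt=\bs-\be_j$ with $x^{t_j}$ in the $j$-th form gives the auxiliary identity $\rho^{(j)}_{\bs+\be_k-\be_j}=\rho^{(j)}_{\bs-\be_j}(b^{(j)}_{\bs-\be_j}-b^{(k)}_{\bs-\be_j})$, and substituting it on the right and cancelling the invertible factor $\rho^{(j)}_{\bs-\be_j}$ yields~\eqref{eq:NC-ab-ba}. I expect the main obstacle to be the careful non-commutative bookkeeping in the derivation of the boxed identity above, in particular controlling the order in which the $a^{(i)}_{\bt}$'s and $b^{(j)}_{\bt}$'s appear when substituting the linear relation and when applying~\eqref{eq:NC-ab-b} to collapse the $Q_{\bs+\be_j}$-terms.
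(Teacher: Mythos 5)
Your argument is correct, and its core is the same as the paper's: both proofs derive everything from the compatibility of the generalized recurrence \eqref{eq:gen-3-term} in two directions $j$ and $k$, using the linear relation $Q_{\bs+\be_k}-Q_{\bs+\be_j}=Q_{\bs}(b^{(j)}_{\bs}-b^{(k)}_{\bs})$ (note the paper's \eqref{eq:NC-lin-b} is stated with the opposite sign; yours is the consistent one) and the already-established \eqref{eq:NC-ab-b} to cancel the $Q_{\bs+\be_j}$-contributions; your boxed intermediate identity is exactly what the paper's subtraction produces. Where you differ is in how the three identities are extracted. The paper reduces all terms to the set $xQ_{\bs}$, $Q_{\bs}$, $Q_{\bs-\be_i}$ (rewriting $Q_{\bs+\be_k-\be_i}$ via shifted linear relations) and collects coefficients, which implicitly relies on the linear independence of the $Q_{\bs-\be_i}$ guaranteed by normality. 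You instead get \eqref{eq:NC-ab-b} directly from monicity by comparing sub-leading coefficients (a cleaner, self-contained step), get \eqref{eq:NC-ab-bb} by comparing coefficients of $x^{|\bs|}$, and get \eqref{eq:NC-ab-ba} by pairing with $x^{s_j-1}$ in the $j$-th sesquilinear form, using \eqref{eq:NC-orth-r}, \eqref{eq:NC-orth-rho} and the auxiliary relation $\rho^{(j)}_{\bs+\be_k-\be_j}=\rho^{(j)}_{\bs-\be_j}(b^{(j)}_{\bs-\be_j}-b^{(k)}_{\bs-\be_j})$. This last substitution trades the paper's basis-decomposition argument for an orthogonality computation; it is equally valid (the invertibility of $\rho^{(j)}_{\bs-\be_j}$ you need to cancel is guaranteed by perfectness, exactly as in \eqref{eq:NC-orth-a}) and arguably makes the non-commutative ordering of the $a$'s and $b$'s easier to track. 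I verified the bookkeeping you were worried about; it goes through as you describe.
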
	
\begin{proof}
Notice first that equations \eqref{eq:gen-3-term} imply another form of the linear problem \eqref{eq:lin-NC-H-rho} to the Hirota system
\begin{equation} \label{eq:NC-lin-b}
Q_{\bs+\be_j}(x) -  Q_{\bs+\be_k}(x) = Q_{\bs}(x) (b^{(j)}_{\bs}- b^{(k)}_{\bs}) .
\end{equation}
By subtracting \eqref{eq:gen-3-term}  shifted in $\be_k$ direction from its $k$-version shifted in $\be_j$ direction we obtain
\begin{gather*}
	x(Q_{\bs + \be_j}(x) - Q_{\bs + \be_k}(x)) = \\ Q_{\bs + \be_j}(x)  b_{\bs+\be_j}^{(k)} - Q_{\bs + \be_k}(x)  b_{\bs+\be_k}^{(j)}  +  \sum_{i=1}^r  Q_{\bs + \be_j - \be_i}(x) a_{\bs + \be_j}^{(i)} - \sum_{i=1}^r  Q_{\bs + \be_k - \be_i}(x) a_{\bs + \be_k}^{(i)} ,
\end{gather*}
which we reduce further replacing terms of the form $Q_{\bs + \be_k}(x)$ by using equation \eqref{eq:gen-3-term}, and terms of the form $Q_{\bs + \be_k - \be_i}(x)$ by using equation \eqref{eq:NC-lin-b} shifted appropriately. Finally, equations \eqref{eq:NC-ab-b} are obtained by collecting coefficients in front of $xQ_{\bs}(x)$, equations \eqref{eq:NC-ab-bb} are obtained by collecting coefficients in front of $Q_{\bs}(x)$, and equations \eqref{eq:NC-ab-ba} are obtained by collecting coefficients in front of $Q_{\bs-\be_i}(x)$. 
\end{proof}

\section{Discrete-time evolution of the moments, and its consequences}
\label{sec:MTS}
In this Section we go out of the initial space of discrete variables by adding suitable evolution variable, what in the simplest case $r=1$ and for commuting variables gives the discrete-time Toda equations~\cite{Hirota-2dT}, as we recalled in Section~\ref{sec:QD,NCH}. For multiple orthogonal polynomials in the commutative case such evolution was studied in   \cite{AptekarevDerevyaginMikiVanAssche}.

Let us assume that the moments evolve in discrete time variable $t\in \mathbb{N}_0$ according simple equation, 
\begin{equation}
	\nu^{(k)}_{j;t} = \nu^{(k)}_{j+t}, \qquad k=1,\dots ,r,
\end{equation}
the direct generalization of equation~\eqref{eq:mu-t}.
For fixed $t$, the corresponding reduced matrix $\mathcal{M}_{\bs;t}$ given by 
\begin{equation} \label{eq:M-t-r}
	\mathcal{M}_{\bs;t} = \left(  \begin{matrix}
		\nu_{t}^{(1)} & \nu_{t+1}^{(1)}  &  \dots & \nu_{t+|\bs|}^{(1)} \\
		\vdots & \vdots & \ddots   & \vdots \\
		\nu_{t+ s_1 - 1}^{(1)}  & \nu_{t+s_1}^{(1)}  &  \dots & \nu_{ t+ |\bs|+ s_1 - 1}^{(1)}  \\
		\cdots & \cdots &   & \cdots \\
		\nu_{t}^{(r)} & \nu_{t+1}^{(r)} &  \dots & \nu_{t+|\bs|}^{(r)}\\
		\vdots & \vdots & \ddots   & \vdots \\
		\nu_{t+ s_r - 1}^{(r)} & \nu_{t+s_r}^{(r)} &  \dots & \nu_{ t+|\bs| + s_r - 1}^{(r)}
	\end{matrix} 	\right),
\end{equation}	
allows to calculate the corresponding non-commutative multiple orthogonal polynomials $Q_{\bs;,t}(x)$, similarly we have the quasideterminants $\rho^{(k)}_{\bs;t}$ and $\tilde{\rho}^{(k)}_{\bs;t}$ and the analogs $a^{(j)}_{\bs;t}$, $b^{(j)}_{\bs;t}$ of the coefficients in the corresponding versions of equations \eqref{eq:gen-3-term} and \eqref{eq:NC-ab-b}-\eqref{eq:NC-ab-ba}.
All functions and identities considered in previous Section have their fixed time analogs. We will be however interested in relations between the functions in neighboring moments of the discrete time $t$. These will provide more insight into the non-commutative multiple orthogonal polynomials and their relation to integrable systems theory.

\begin{Prop}
	When $r\geq 1$ then the multiple polynomials $Q_{\bs;t}(x)$ satisfy the following non-commutative multiple version of the linear problem \eqref{eq:lin-A}	
	\begin{equation} \label{eq:NC-Q-A-j-t}
	xQ_{\bs;t+1}(x)  = Q_{\bs + \be_j;t}(x) + Q_{\bs;t}(x) A^{(j)}_{\bs;t} , \qquad j=1,\dots r,	
	\end{equation}
	where 
	\begin{equation} \label{eq:NC-A-Q-r}
	A^{(j)}_{\bs;t} = [\rho^{(j)}_{\bs;t}]^{-1} \rho^{(j)}_{\bs;t+1}.
	\end{equation}
\end{Prop}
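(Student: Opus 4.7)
The plan is to mimic the classical argument leading to \eqref{eq:lin-A}, replacing the Jacobi--Desnanot manipulations by a degree count combined with orthogonality and the normality assumption at time $t$. Since both $xQ_{\bs;t+1}(x)$ and $Q_{\bs+\be_j;t}(x)$ are monic polynomials of degree $|\bs|+1$, the difference
\[
R(x) := xQ_{\bs;t+1}(x) - Q_{\bs + \be_j;t}(x)
\]
has degree at most $|\bs|$. I would show that $R(x)$ satisfies all the orthogonality conditions defining $Q_{\bs;t}(x)$; by normality the space of polynomials of degree $\leq |\bs|$ satisfying those conditions is spanned as a right $\mathcal{R}$-module by $Q_{\bs;t}(x)$, so
\[
R(x) = Q_{\bs;t}(x) A^{(j)}_{\bs;t}
\]
for a unique $A^{(j)}_{\bs;t}\in \mathcal{R}$; the scalar appears on the right because the sesquilinear forms are linear in the second argument.

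The key computation is the compatibility between multiplication by $x$ and the discrete time shift. Writing $Q_{\bs;t+1}(x) = \sum_l q_l x^l$ and using $\nu^{(k)}_{j;t} = \nu^{(k)}_{j+t}$, one obtains
\[
\bigl\langle x^i, xQ_{\bs;t+1}(x)\bigr\rangle_{(k),t} \;=\; \sum_l \nu^{(k)}_{i+l+1+t}\, q_l \;=\; \bigl\langle x^i, Q_{\bs;t+1}(x)\bigr\rangle_{(k),t+1}.
\]
Consequently the orthogonality of $Q_{\bs;t+1}$ at time $t+1$ transfers to the vanishing of $\langle x^i, xQ_{\bs;t+1}\rangle_{(k),t}$ for $i=0,\dots,s_k-1$, $k=1,\dots,r$. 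Combined with the orthogonality of $Q_{\bs+\be_j;t}$ at time $t$ in the same range (whose constraints include those indexed by $\bs$), this yields $\langle x^i, R(x)\rangle_{(k),t}=0$ throughout the defining range of $Q_{\bs;t}$.

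To identify $A^{(j)}_{\bs;t}$ I would pair both sides of $R(x)=Q_{\bs;t}(x) A^{(j)}_{\bs;t}$ with $x^{s_j}$ in the $j$-th form at time $t$. The same shift identity together with the time-$(t+1)$ analogue of \eqref{eq:NC-orth-rho} gives
\[
\bigl\langle x^{s_j}, xQ_{\bs;t+1}(x)\bigr\rangle_{(j),t} \;=\; \bigl\langle x^{s_j}, Q_{\bs;t+1}(x)\bigr\rangle_{(j),t+1} \;=\; \rho^{(j)}_{\bs;t+1},
\]
whereas $\langle x^{s_j}, Q_{\bs+\be_j;t}\rangle_{(j),t}=0$, since the shift $\bs\to\bs+\be_j$ extends the range of orthogonal test functions in the $j$-th form to include $x^{s_j}$. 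Comparing with $\rho^{(j)}_{\bs;t}\, A^{(j)}_{\bs;t}$, obtained on the right-hand side by the right-linearity of the form, produces formula \eqref{eq:NC-A-Q-r}.

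The only real obstacle is book-keeping: one must keep track that scalars multiply $Q$ from the right (dictated by the convention for the sesquilinear forms) and that, thanks to commutativity of the indeterminate $x$, the time shift $t\to t+1$ intertwines with multiplication by $x$ in the sense of the displayed identity. No new quasideterminantal identity is needed; alternatively, one could reproduce \eqref{eq:NC-A-Q-r} directly by a Sylvester-type reduction on the matrix \eqref{eq:M-t-r} enlarged by appropriate auxiliary rows, but the orthogonality-plus-normality route seems shorter.
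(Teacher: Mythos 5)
Your argument is correct, but it is not the route the paper's proof takes. The paper proves the proposition quasideterminantally: it computes $Q_{\bs+\be_j;t}(x)$ as the quasideterminant of $\mathcal{M}_{\bs+\be_j;t}$ bordered by the row $(1,x,\dots,x^{|\bs|+1})$, applies the Sylvester identity with respect to the first and last columns and the two distinguished rows (the last row of the enlarged $j$th block and the polynomial row), and finishes with row homological relations; the identity \eqref{eq:NC-Q-A-j-t} then falls out as an identity between quasideterminants, with $A^{(j)}_{\bs;t}$ appearing automatically in the form \eqref{eq:NC-A-Q-r}. Your proof instead runs the degree-count-plus-orthogonality argument: $xQ_{\bs;t+1}-Q_{\bs+\be_j;t}$ has degree at most $|\bs|$, the shift identity $\langle x^i, xQ_{\bs;t+1}\rangle_{(k);t}=\langle x^i, Q_{\bs;t+1}\rangle_{(k);t+1}$ transfers the time-$(t+1)$ orthogonality down to time $t$, normality forces proportionality to $Q_{\bs;t}$ on the right, and pairing with $x^{s_j}$ pins down the coefficient. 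This is precisely the ``alternative proof'' the author sketches in the Remark immediately following the proposition, so the content is endorsed by the paper itself. The trade-off: your route is shorter and conceptually transparent, but it leans on the normality (perfectness) assumption to get the one-dimensionality of the solution space, whereas the Sylvester-identity proof is a purely algebraic identity between quasideterminants that exists whenever the relevant quasideterminants are defined, and it fits the paper's program of deriving everything from quasideterminantal identities. One small point worth making explicit in your write-up: the claim that polynomials of degree at most $|\bs|$ satisfying the $|\bs|$ orthogonality conditions form a free right $\mathcal{R}$-module of rank one generated by $Q_{\bs;t}$ is exactly the normality of $\bs$ at time $t$, which holds here because the system is assumed perfect; without that remark the step ``$R(x)=Q_{\bs;t}(x)A$'' is unjustified.
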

\begin{proof}
	Calculate the polynomial $Q_{\bs+\be_j;t}(x)$ directly as the quasideterminant of the matrix $\mathcal{M}_{\bs+\be_j;t}$ supplemented by the row $(1,x, \dots , x^{|\bs|}, x^{|\bs|+1})$, with respect to the rightmost element of that row. On the other hand, we can do the same by using the Sylvester identity with respect to the first and the last columns, the last row of the enlarged $j$th block and that last supplementing row, and after using row homological relations we get the statement. 
\end{proof}
\begin{Rem}
As the alternative proof we can observe that the difference $xQ_{\bs;t+1}(x)  - Q_{\bs + \be_j;t}(x)$ is a polynomial of degree not greater than $|\bs|$ which satisfies the same orthogonality conditions as $Q_{\bs;t}(x)$. To find  proportionality coefficient we take $j$-th scalar product in time $t$ of both sides of \eqref{eq:NC-Q-A-j-t} with $x^{s_j}$ noticing that
\begin{equation*}
	\bigl 	\langle x^{s_j}, Q_{\bs;t} \bigr \rangle_{(j),t} = \rho^{(j)}_{\bs;t}, \quad
	\bigl 	\langle x^{s_j}, Q_{\bs+\be_j;t} \bigr\rangle_{(j);t} = 0, \quad
	\bigl 	\langle x^{s_j}, x Q_{\bs;t+1} \bigr\rangle_{(j);t}  =
	\bigl 	\langle x^{s_j}, Q_{\bs;t+1} \bigr\rangle_{(j);t+1} = \rho^{(j)}_{\bs;t+1}. \qquad
\end{equation*}
\end{Rem}
\begin{Cor}
Compatibility of the equations \eqref{eq:NC-Q-A-j-t} between themselves implies  the following system
\begin{equation} \label{eq:AA-r}
	A_{\bs;t+1}^{(j)} - A_{\bs;t+1}^{(k)} = A_{\bs+\be_k;t}^{(j)} - A_{\bs+\be_j;t}^{(k)}, 
	\quad A_{\bs;t}^{(k)}  A_{\bs+\be_k;t}^{(j)} = A_{\bs;t}^{(j)} A_{\bs+\be_j;t}^{(k)} .
\end{equation}	
\end{Cor}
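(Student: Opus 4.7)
The plan is to derive the two compatibility conditions by computing a single quantity, namely $xQ_{\bs+\be_j;t+1}(x) - xQ_{\bs+\be_k;t+1}(x)$, in two different ways and then comparing the coefficients of two polynomials of different degree.

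First I would subtract the $j$-version of \eqref{eq:NC-Q-A-j-t} from its $k$-version at the base index $\bs$, which gives the useful identity
\begin{equation*}
Q_{\bs+\be_j;t}(x) - Q_{\bs+\be_k;t}(x) = Q_{\bs;t}(x)\bigl(A^{(k)}_{\bs;t} - A^{(j)}_{\bs;t}\bigr).
\end{equation*}
Writing this identity at time $t+1$ and multiplying by $x$, then using \eqref{eq:NC-Q-A-j-t} in direction $j$ at base $\bs$ to replace $xQ_{\bs;t+1}(x)$, yields the first expression for $xQ_{\bs+\be_j;t+1}(x) - xQ_{\bs+\be_k;t+1}(x)$ as a combination of $Q_{\bs+\be_j;t}(x)$ and $Q_{\bs;t}(x)$.

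Second, I would compute the same difference directly: apply \eqref{eq:NC-Q-A-j-t} to $xQ_{\bs+\be_j;t+1}(x)$ in direction $k$, and to $xQ_{\bs+\be_k;t+1}(x)$ in direction $j$. The common term $Q_{\bs+\be_j+\be_k;t}(x)$ cancels, so the second expression for the same difference is
\begin{equation*}
Q_{\bs+\be_j;t}(x) A^{(k)}_{\bs+\be_j;t} - Q_{\bs+\be_k;t}(x) A^{(j)}_{\bs+\be_k;t},
\end{equation*}
and after using the boxed identity once more to eliminate $Q_{\bs+\be_k;t}(x)$ it becomes a combination of $Q_{\bs+\be_j;t}(x)$ and $Q_{\bs;t}(x)$ as well.

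Third, I would equate the two expressions and invoke linear independence: since $Q_{\bs+\be_j;t}(x)$ is monic of degree $|\bs|+1$ while $Q_{\bs;t}(x)$ has degree $|\bs|$, the coefficients (placed on the right, in view of non-commutativity) must match separately. The coefficient of $Q_{\bs+\be_j;t}(x)$ immediately delivers the additive relation $A^{(j)}_{\bs;t+1}-A^{(k)}_{\bs;t+1}=A^{(j)}_{\bs+\be_k;t}-A^{(k)}_{\bs+\be_j;t}$, while the coefficient of $Q_{\bs;t}(x)$ gives
\begin{equation*}
\bigl(A^{(k)}_{\bs;t} - A^{(j)}_{\bs;t}\bigr) A^{(j)}_{\bs+\be_k;t} = A^{(j)}_{\bs;t}\bigl(A^{(k)}_{\bs;t+1} - A^{(j)}_{\bs;t+1}\bigr);
\end{equation*}
substituting the first relation into the bracket on the right collapses the expression to the multiplicative identity $A^{(k)}_{\bs;t} A^{(j)}_{\bs+\be_k;t} = A^{(j)}_{\bs;t} A^{(k)}_{\bs+\be_j;t}$.

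The main obstacle is purely bookkeeping: since the coefficients $A^{(j)}_{\bs;t}$ live in a division ring, every factor must appear on the correct side throughout the derivation, and the separation-of-coefficients step must explicitly rely on the fact that polynomials of different degree are right-linearly independent over $\mathcal{R}$. Once the two alternative expressions for $xQ_{\bs+\be_j;t+1}(x) - xQ_{\bs+\be_k;t+1}(x)$ are correctly written down, the rest is straightforward algebra with no further identities needed.
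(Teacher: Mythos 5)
Your proposal is correct and is precisely the computation the paper intends: the Corollary is stated without an explicit proof, but your cross-differencing of the two directions of \eqref{eq:NC-Q-A-j-t}, elimination via the identity $Q_{\bs+\be_j;t}(x)-Q_{\bs+\be_k;t}(x)=Q_{\bs;t}(x)\bigl(A^{(k)}_{\bs;t}-A^{(j)}_{\bs;t}\bigr)$, and separation of right-coefficients by degree is exactly the method the paper itself uses for the analogous compatibility conditions \eqref{eq:NC-ab-b}--\eqref{eq:NC-ab-ba}. All orderings of the non-commutative factors and the final substitution collapsing to $A^{(k)}_{\bs;t}A^{(j)}_{\bs+\be_k;t}=A^{(j)}_{\bs;t}A^{(k)}_{\bs+\be_j;t}$ check out.
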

\begin{Cor}
	Compatibility of the constant time $t$ versions of the linear equations~\eqref{eq:lin-NC-H-rho} satisfied by the orthogonal polynomials, their shifted time $t+1$ versions	and equations \eqref{eq:NC-Q-A-j-t} supplements, using also relation~\eqref{eq:NC-A-Q-r}, the Hirota system~\eqref{eq:NC-H-rho} by analogous equations
	\begin{equation}
		\label{eq:NC-H-t-rho}
[\rho^{(i)}_{\bs;t}]^{-1} \rho^{(i)}_{\bs;t+1} - [\rho^{(j)}_{\bs;t}]^{-1} \rho^{(j)}_{\bs;t+1} + [\rho^{(i)}_{\bs;t}]^{-1} \rho^{(i)}_{\bs +\be_j;t} = 0 , \qquad i\neq j .		
	\end{equation} 
\end{Cor}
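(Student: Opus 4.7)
The plan is to treat the time-evolution equation \eqref{eq:NC-Q-A-j-t} as a linear problem whose consistency with the constant-time recursion \eqref{eq:lin-NC-H-rho} will force the claimed identity. The key observation is that \eqref{eq:NC-Q-A-j-t} expresses one and the same polynomial $xQ_{\bs;t+1}(x)$ in $r$ different ways, one for each choice of the index $j\in\{1,\dots,r\}$, so the difference of two such expressions for indices $i\ne j$ must vanish. Subtracting, I would obtain the relation
\[
Q_{\bs+\be_i;t}(x) - Q_{\bs+\be_j;t}(x) + Q_{\bs;t}(x)\bigl(A^{(i)}_{\bs;t} - A^{(j)}_{\bs;t}\bigr) = 0.
\]

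Next I would feed in the time-$t$ version of \eqref{eq:lin-NC-H-rho}, namely $Q_{\bs+\be_i;t}(x) - Q_{\bs+\be_j;t}(x) = Q_{\bs;t}(x)[\rho^{(i)}_{\bs;t}]^{-1}\rho^{(i)}_{\bs+\be_j;t}$, replacing the first two terms by a right multiple of $Q_{\bs;t}(x)$. The outcome takes the form $Q_{\bs;t}(x)\cdot C = 0$ with
\[
C = [\rho^{(i)}_{\bs;t}]^{-1}\rho^{(i)}_{\bs+\be_j;t} + A^{(i)}_{\bs;t} - A^{(j)}_{\bs;t} \in \mathcal{R}.
\]
Substituting the defining expression \eqref{eq:NC-A-Q-r} for $A^{(k)}_{\bs;t}$, $k\in\{i,j\}$, the vanishing $C=0$ is exactly \eqref{eq:NC-H-t-rho}.

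The only step that needs a brief justification is the cancellation of the monic polynomial factor $Q_{\bs;t}(x)$ on the left: because $Q_{\bs;t}(x)$ has leading coefficient $1$ in the commutative indeterminate $x$, any polynomial identity $Q_{\bs;t}(x)\cdot C = 0$ with coefficients in $\mathcal{R}$ forces $C=0$ by reading off the coefficient of $x^{|\bs|}$. I do not foresee any serious obstacle; in particular the shifted $t+1$ version of \eqref{eq:lin-NC-H-rho} does not actually enter the derivation, and is automatically consistent since Proposition~\ref{prop:NC-rho-Q} applies at every discrete time. Thus \eqref{eq:NC-H-t-rho}, together with the two copies of \eqref{eq:NC-H-rho} at times $t$ and $t+1$, will form the full compatible augmented system on $\ZZ^{r}\times\NN_0$.
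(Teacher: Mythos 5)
Your derivation is correct and is exactly the compatibility computation that the corollary alludes to: equate the two expressions for $xQ_{\bs;t+1}(x)$ coming from \eqref{eq:NC-Q-A-j-t} with indices $i$ and $j$, eliminate $Q_{\bs+\be_i;t}(x)-Q_{\bs+\be_j;t}(x)$ via the time-$t$ copy of \eqref{eq:lin-NC-H-rho}, and read off the coefficient of $x^{|\bs|}$ against the monic factor $Q_{\bs;t}(x)$ to cancel it. Your side remark that the shifted $t+1$ copies of \eqref{eq:lin-NC-H-rho} do not actually enter is accurate; the paper's own Remark after the corollary records the equivalent shortcut of pairing both sides of \eqref{eq:NC-Q-A-j-t} with $x^{s_i}$ under $\langle\cdot\,,\cdot\rangle_{(i);t}$, which your argument does not need.
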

\begin{Rem}
Equation \eqref{eq:NC-H-t-rho} can be also obtained by taking take $i$-th scalar product in time $t$ of both sides of \eqref{eq:NC-Q-A-j-t} with $x^{s_i}$, $i\neq j$.
\end{Rem}

To find non-commutative multiple version of the discrete-time Toda equations \eqref{eq:dt-T-AB} let us formulate fist the corresponding version of the linear equations \eqref{eq:lin-B}.
\begin{Prop}
The multiple orthogonal polynomials evolve according to the equations
	\begin{equation} \label{eq:Q-B-t}
	Q_{\bs;t+1}(x)  = Q_{\bs;t}(x) - \sum_{i=1}^r Q_{\bs - \be_i;t+1}(x) B^{(i)}_{\bs;t} ,
	\end{equation}
	where 
\begin{equation} \label{eq:B-Q-r}
B^{(j)}_{\bs;t} = [\rho^{(j)}_{\bs - \be_j;t+1}]^{-1} \rho^{(j)}_{\bs;t}.
\end{equation}
\end{Prop}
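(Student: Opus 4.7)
The plan is to mimic the alternative argument sketched in the remark following the previous proposition, working entirely with sesquilinear forms and the shift identity for moments, since this avoids a direct Sylvester manipulation and makes the meaning of the coefficients $B^{(j)}_{\bs;t}$ transparent. The key observation is that the time shift $\nu^{(k)}_{j;t+1}=\nu^{(k)}_{j+1;t}$ translates into the identity
\begin{equation*}
\bigl\langle x^{i^\prime},\,f(x)\bigr\rangle_{(k);t+1}=\bigl\langle x^{i^\prime+1},\,f(x)\bigr\rangle_{(k);t},
\end{equation*}
valid for every polynomial $f\in\mathcal{R}[x]$ and every $k$, which will let me convert orthogonality at time $t$ to a shifted orthogonality at time $t+1$.

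First I would note that, since $Q_{\bs;t+1}(x)$ and $Q_{\bs;t}(x)$ are both monic of degree $|\bs|$, their difference $R_{\bs;t}(x):=Q_{\bs;t+1}(x)-Q_{\bs;t}(x)$ has degree at most $|\bs|-1$. Using the shift identity above together with the orthogonality relations \eqref{eq:NC-orth-r} for $Q_{\bs;t}$ (at time $t$) and for $Q_{\bs;t+1}$ (at time $t+1$), I obtain that for every $k=1,\dots,r$ and every $i^\prime=0,\dots,s_k-2$,
\begin{equation*}
\bigl\langle x^{i^\prime},\,R_{\bs;t}(x)\bigr\rangle_{(k);t+1}=\bigl\langle x^{i^\prime},\,Q_{\bs;t+1}\bigr\rangle_{(k);t+1}-\bigl\langle x^{i^\prime+1},\,Q_{\bs;t}\bigr\rangle_{(k);t}=0,
\end{equation*}
the two terms vanishing because $i^\prime\le s_k-1$ and $i^\prime+1\le s_k-1$ respectively. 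Hence $R_{\bs;t}$ lies in the solution space of $|\bs|-r$ homogeneous linear equations in $|\bs|$ unknowns (the non-leading coefficients), so by perfectness of the system at time $t+1$ this space has dimension $r$ and is spanned by $\{Q_{\bs-\be_i;t+1}(x)\}_{i=1}^r$, each element satisfying the required orthogonality conditions at time $t+1$. Therefore there exist unique $B^{(i)}_{\bs;t}\in\mathcal{R}$ such that \eqref{eq:Q-B-t} holds.

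To identify the coefficients, I would take the $j$-th scalar product at time $t+1$ of both sides of \eqref{eq:Q-B-t} with $x^{s_j-1}$. On the left-hand side,
\begin{equation*}
\bigl\langle x^{s_j-1},\,Q_{\bs;t+1}-Q_{\bs;t}\bigr\rangle_{(j);t+1}=0-\bigl\langle x^{s_j},\,Q_{\bs;t}\bigr\rangle_{(j);t}=-\rho^{(j)}_{\bs;t},
\end{equation*}
by \eqref{eq:NC-orth-rho} and the shift identity. On the right-hand side, every term with $i\neq j$ drops because $\bs-\be_i$ has $j$-th component still equal to $s_j$, so $\langle x^{s_j-1},Q_{\bs-\be_i;t+1}\rangle_{(j);t+1}=0$; while for $i=j$ the factor is precisely the normalization $\rho^{(j)}_{\bs-\be_j;t+1}$ given by \eqref{eq:NC-orth-rho} applied to $Q_{\bs-\be_j;t+1}$. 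Equating both sides yields
\begin{equation*}
-\rho^{(j)}_{\bs;t}=-\rho^{(j)}_{\bs-\be_j;t+1}\,B^{(j)}_{\bs;t},
\end{equation*}
which, by invertibility of $\rho^{(j)}_{\bs-\be_j;t+1}$ in the free division ring, gives \eqref{eq:B-Q-r}.

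The only delicate step is checking that $\{Q_{\bs-\be_i;t+1}(x)\}_{i=1}^r$ actually constitutes a basis of the $r$-dimensional solution space; I would justify this by appealing to the perfectness hypothesis (all multi-indices are normal at time $t+1$), which guarantees that the system of $|\bs|-r$ orthogonality conditions is of maximal rank and that the $r$ candidate polynomials are genuinely distinct solutions—linear independence then follows because any non-trivial linear combination equal to zero would contradict the uniqueness of $Q_{\bs-\be_j;t+1}$ as the monic degree-$(|\bs|-1)$ polynomial with its prescribed orthogonality type. This is the main, though mild, obstacle; everything else is a routine bookkeeping of scalar products and moment shifts.
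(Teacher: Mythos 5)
Your proposal is correct and follows essentially the same route as the paper's own proof: both establish that $Q_{\bs;t+1}-Q_{\bs;t}$ satisfies the orthogonality conditions $\bigl\langle x^{i},\cdot\bigr\rangle_{(k);t+1}=0$ for $i=0,\dots,s_k-2$ via the moment-shift identity, decompose it in the basis $\{Q_{\bs-\be_i;t+1}\}_{i=1}^r$, and then identify $B^{(j)}_{\bs;t}$ by pairing with $x^{s_j-1}$ at time $t+1$. Your extra care about the dimension count and linear independence of the basis is a reasonable elaboration of what the paper asserts from normality/perfectness.
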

\begin{proof}
Because
\begin{equation*}
\langle x^{i}, Q_{\bs;t} \rangle_{(j);t+1} = \langle x^{i+1}, Q_{\bs;t} \rangle_{(j);t}	= 0, \qquad i=0, \dots , s_j-2,	
\end{equation*}	
therefore the difference $Q_{\bs;t+1}(x) - Q_{\bs;t}(x)	$ is a polynomial of degree not greater than $|\bs|-1$ satisfying the orthogonality conditions as above, thus must be decomposable in the basis $\{  Q_{\bs - \be_j;t+1}(x)\}_{j=1,\dots , r}$.  To find  proportionality coefficient we take $j$-th scalar product in time $t+1$ of both sides of \eqref{eq:Q-B-t} with $x^{s_j-1}$ noticing that
\begin{gather*}
\bigl	\langle x^{s_j-1}, Q_{\bs;t+1} \bigr \rangle_{(j);t+1} =0, \quad
\bigl	\langle x^{s_j-1}, Q_{\bs;t} \bigr \rangle_{(j);t+1} = \langle x^{s_j}, Q_{\bs;t} \bigr \rangle_{(j);t} =  \rho^{(j)}_{\bs;t}, \\
\bigl	\langle x^{s_j-1}, Q_{\bs-\be_j;t+1} \bigr \rangle_{(j);t+1}  = \rho^{(j)}_{\bs-\be_j;t+1}, \qquad 
\bigl	\langle x^{s_j-1}, Q_{\bs-\be_i;t+1} \bigr \rangle_{(j);t+1}  = 0 \quad \text{for} \quad i\neq j	. 
\end{gather*}
\end{proof}
The following result can be obtained by standard calculation.
\begin{Prop}
In addition to equations \eqref{eq:AA-r}, compatibility of the equations \eqref{eq:NC-Q-A-j-t} and \eqref{eq:Q-B-t} implies the following system	
\begin{align}
	\label{eq:AB-BA-r}
	A_{\bs-\be_j;t+1}^{(j)} B_{\bs;t+1}^{(j)} & =  B_{\bs;t}^{(j)}  A_{\bs;t}^{(j)}, \\
	\label{eq:BAA-r}
	B^{(k)}_{\bs;t}\left( A^{(k)}_{\bs;t} - A^{(j)}_{\bs;t} \right) &  = \left( A^{(k)}_{\bs-\be_k;t+1} - A^{(j)}_{\bs-\be_k;t+1} \right) B^{(k)}_{\bs + \be_j;t},\\
	\label{eq:A+B-r}
	A_{\bs;t+1}^{(j)} + \sum_{i=1}^r B_{\bs;t+1}^{(i)} & = A_{\bs;t}^{(j)} + \sum_{i=1}^r B_{\bs+\be_j;t}^{(i)}. 
\end{align}	
\end{Prop}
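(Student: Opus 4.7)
The plan is to extract each of \eqref{eq:AB-BA-r}, \eqref{eq:BAA-r}, \eqref{eq:A+B-r} as the compatibility condition of the two linear systems \eqref{eq:NC-Q-A-j-t} and \eqref{eq:Q-B-t} satisfied by the polynomials $Q_{\bs;t}(x)$. Concretely, I compute $xQ_{\bs;t+2}(x)$ in two different orders (first shift $\bs$ via $\be_j$, then shift $t$; versus first shift $t$, then multiply by $x$), equate the outputs, and read off the relations from the coefficients in a basis of lower-degree polynomials.

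First, a preliminary identity is needed. Multiplying \eqref{eq:Q-B-t} by $x$ and applying \eqref{eq:NC-Q-A-j-t} (shifted $\bs \to \bs -\be_i$, and with the choice of free upper index equal to $i$) to each term, so that $xQ_{\bs-\be_i;t+1}(x) = Q_{\bs;t}(x) + Q_{\bs-\be_i;t}(x) A^{(i)}_{\bs-\be_i;t}$, while using \eqref{eq:NC-Q-A-j-t} itself on the left-hand side, one obtains an expansion of $xQ_{\bs;t}(x)$ in the family $\{Q_{\bs+\be_j;t},\,Q_{\bs;t},\,Q_{\bs-\be_i;t}\}_{i=1,\dots,r}$. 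Comparing with \eqref{eq:gen-3-term} and invoking normality yields
\begin{equation*}
	b^{(j)}_{\bs;t} = A^{(j)}_{\bs;t} + \sum_{i=1}^r B^{(i)}_{\bs;t}, \qquad a^{(i)}_{\bs;t} = A^{(i)}_{\bs-\be_i;t} B^{(i)}_{\bs;t},
\end{equation*}
and in particular $b^{(j)}_{\bs;t}-b^{(k)}_{\bs;t}=A^{(j)}_{\bs;t}-A^{(k)}_{\bs;t}$, a relation that will be crucial in the last step.

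For the main computation, I apply \eqref{eq:NC-Q-A-j-t} with $t\to t+1$, then \eqref{eq:Q-B-t} with $\bs\to\bs+\be_j$, to obtain one form of $xQ_{\bs;t+2}(x)$. To obtain the other, I apply \eqref{eq:Q-B-t} with $t\to t+1$, multiply by $x$, and use \eqref{eq:NC-Q-A-j-t} (again with the free index matched to $i$) to rewrite each $xQ_{\bs-\be_i;t+2}(x)$; the leftover $Q_{\bs;t}(x)$ that appears is eliminated via \eqref{eq:Q-B-t}. Equating the two expressions and reducing $Q_{\bs+\be_j-\be_i;t+1}(x)$ for $i\neq j$ to the basis $\{Q_{\bs;t+1},Q_{\bs-\be_i;t+1}\}$ through
\begin{equation*}
	Q_{\bs+\be_j-\be_i;t+1}(x) = Q_{\bs;t+1}(x) + Q_{\bs-\be_i;t+1}(x)\bigl(b^{(j)}_{\bs-\be_i;t+1}-b^{(i)}_{\bs-\be_i;t+1}\bigr),
\end{equation*}
which follows from \eqref{eq:NC-lin-b}, produces a single polynomial identity of the form $\alpha\, Q_{\bs;t+1}(x)+\sum_i \beta_i \,Q_{\bs-\be_i;t+1}(x)=0$.

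Finally, normality makes $Q_{\bs;t+1}$ and the family $\{Q_{\bs-\be_i;t+1}\}_{i=1,\dots,r}$ linearly independent in the relevant sense (see the proof of \eqref{eq:gen-3-term}), so each coefficient must vanish separately: the coefficient of $Q_{\bs;t+1}$ is precisely \eqref{eq:A+B-r}; the coefficient of $Q_{\bs-\be_j;t+1}$, to which the term $Q_{\bs+\be_j-\be_i;t+1}$ for $i=j$ does not contribute (as it has collapsed to $Q_{\bs;t+1}$), yields \eqref{eq:AB-BA-r}; and for $i\neq j$ the coefficient of $Q_{\bs-\be_i;t+1}$ yields \eqref{eq:BAA-r} after invoking the just-established \eqref{eq:AB-BA-r} and the preliminary substitution $b^{(j)}-b^{(i)}=A^{(j)}-A^{(i)}$. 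The main obstacle is purely one of bookkeeping through the iterated substitutions and the rewrite via \eqref{eq:NC-lin-b}; no ingredient beyond the two linear problems, the three-term relation, and the normality assumption is required.
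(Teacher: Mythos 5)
Your strategy --- computing $xQ_{\bs;t+2}(x)$ in two orders through \eqref{eq:NC-Q-A-j-t} and \eqref{eq:Q-B-t}, expanding everything over the right-linearly independent family $Q_{\bs;t+1}$, $\{Q_{\bs-\be_i;t+1}\}_{i=1,\dots,r}$, and invoking normality to equate coefficients --- is exactly the ``standard calculation'' the paper alludes to without printing it, and your preliminary identities $b^{(j)}_{\bs;t}=A^{(j)}_{\bs;t}+\sum_i B^{(i)}_{\bs;t}$ and $a^{(i)}_{\bs;t}=A^{(i)}_{\bs-\be_i;t}B^{(i)}_{\bs;t}$ are correct and do the work you intend. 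The extraction of \eqref{eq:A+B-r} from the coefficient of $Q_{\bs;t+1}$ and of \eqref{eq:AB-BA-r} from the coefficient of $Q_{\bs-\be_j;t+1}$ goes through verbatim.

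There is, however, a sign problem in your last step, inherited from the paper itself. Equation \eqref{eq:NC-lin-b} as printed is inconsistent with \eqref{eq:gen-3-term}: subtracting the $j$- and $k$-versions of \eqref{eq:gen-3-term} gives $Q_{\bs+\be_j}(x)-Q_{\bs+\be_k}(x)=Q_{\bs}(x)\,(b^{(k)}_{\bs}-b^{(j)}_{\bs})$, with the difference of the $b$'s reversed. Consequently the expansion you quote, $Q_{\bs+\be_j-\be_i;t+1}=Q_{\bs;t+1}+Q_{\bs-\be_i;t+1}\bigl(b^{(j)}_{\bs-\be_i;t+1}-b^{(i)}_{\bs-\be_i;t+1}\bigr)$, carries the wrong sign in its coefficient; the correct coefficient is $b^{(i)}_{\bs-\be_i;t+1}-b^{(j)}_{\bs-\be_i;t+1}=A^{(i)}_{\bs-\be_i;t+1}-A^{(j)}_{\bs-\be_i;t+1}$, which you can read off directly by subtracting the $i$- and $j$-versions of \eqref{eq:NC-Q-A-j-t} at $(\bs-\be_i,t+1)$, with no appeal to \eqref{eq:NC-lin-b} at all. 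If you keep your version together with your (correct) substitution $b^{(j)}-b^{(i)}=A^{(j)}-A^{(i)}$, the coefficient of $Q_{\bs-\be_i;t+1}$ for $i\neq j$ produces \eqref{eq:BAA-r} with the two sides differing by an overall sign; with the corrected expansion it produces \eqref{eq:BAA-r} exactly, as one can confirm independently from \eqref{eq:NC-A-Q-r}, \eqref{eq:B-Q-r} and \eqref{eq:NC-H-t-rho}, both sides being equal to $-[\rho^{(i)}_{\bs-\be_i;t+1}]^{-1}\rho^{(i)}_{\bs+\be_j;t}$. The fix is therefore local: expand $Q_{\bs+\be_j-\be_i;t+1}$ via the $A$-system rather than via the printed \eqref{eq:NC-lin-b}; the rest of your argument stands.
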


\begin{Rem}
	The commutative version of the linear system \eqref{eq:NC-Q-A-j-t}, \eqref{eq:Q-B-t} satisfied by the multiple orthogonal polynomials and   the corresponding compatibility conditions \eqref{eq:AA-r}, \eqref{eq:AB-BA-r}--\eqref{eq:A+B-r} was given in	 \cite{AptekarevDerevyaginMikiVanAssche}, see also \cite{Doliwa-MOP} for its determinantal interpretation.
\end{Rem}

\section{Final remarks}

We constructed the general theory of non-commutative multiple bi-orthogonal polynomials based on quasideterminantal identities, and we have presented it within the context of integrable systems theory. In particular, we have obtained new class of solutions to the non-commutative Hirota system. The analogs of the spectral data are provided by the (formal) bi-moments. This allows to consider such generalized theory of orthogonal polynomials as a part of the integrable systems theory.

The fact that orthogonal polynomials and integrable systems are deeply connected is not a novelty. Our work provides its explanation on the level of the most general non-commutative discrete integrable system known in the literature. 
On the other hand, the approach of this paper with matrices of formal bi-moments considered as primary objects, can be extended to bi-infinite matrices. This gives new insight into the theory of integrable systems generalizing the so called direct method by Hirota. It offers, for example, explanation of the  "nonlinear" superposition formula in terms of the "linear" sum of bi-moment matrices. Moreover, various symmetry properties imposed on the bi-moment matrices should produce the corresponding reductions of the non-commutative Hirota system, as  the non-commutative multidimensional Toda lattice equations given in the present paper.  

%\subsection*{Acknowledgment}

% ------------------------------------------------------------------------
\end{document}